\newtheorem{theorem}{Theorem}
\newtheorem{lemma}[theorem]{Lemma}
\newtheorem{corollary}[theorem]{Corollary}
\newtheorem{claim}[theorem]{Claim}
\theoremstyle{definition}
\renewcommand{\AA}{\mathcal{A}}
\newcommand{\CC}{\mathcal{C}}
\newcommand{\GG}{\mathcal{G}}
\newcommand{\HH}{\mathcal{H}}
\newcommand{\MM}{\mathcal{M}}
\newcommand{\OO}{\mathcal{O}}
\newcommand{\UU}{\mathcal{U}}
\newcommand{\ZZ}{\mathbb{Z}}
\renewcommand{\SS}{\mathcal{S}}
\newcommand{\Exp}{\ensuremath{\operatorname{\mathbf{E}}}}
\newcommand*{\bdiv}{%
  \nonscript\mskip-\medmuskip\mkern5mu%
  \mathbin{\operator@font div}\penalty900\mkern5mu%
  \nonscript\mskip-\medmuskip
}
\newcommand{\gtodolight}[1]{\todo[color=red!25]{#1}}
\title{Amplifiers and Suppressors of Selection for  \\ the Moran Process on Undirected Graphs}
\author{George Giakkoupis\\
INRIA, France\\
george.giakkoupis@inria.fr
}
\begin{document}

\maketitle

\begin{abstract}
We consider the classic Moran process modeling the spread of genetic mutations, as extended to structured populations by Lieberman et al.\ (Nature, 2005).
In this process, individuals are the vertices of a connected graph $G$.
Initially, there is a single mutant vertex, chosen uniformly at random.
In each step, a random vertex is selected for reproduction with a probability proportional to its fitness:
mutants have fitness $r>1$, while non-mutants have fitness~$1$.
The vertex chosen to reproduce places a copy of itself to a uniformly random neighbor in $G$, replacing the individual that was there.
The process ends when the mutation either reaches fixation (i.e., all vertices are mutants), or gets extinct.
The principal quantity of interest is the probability with which each of the two outcomes occurs.

A problem that has received significant attention recently concerns the existence of families of graphs, called strong amplifiers of selection, for which the fixation probability tends to~1 as the order $n$ of the graph increases, and the existence of strong suppressors of selection, for which this probability tends to 0.
For the case of directed graphs, it is known that both strong amplifiers and suppressors exist.
For the case of undirected graphs, however, the problem has remained open, and the general belief has been that neither strong amplifiers nor suppressors exist.
In this paper we disprove this belief, by providing the first examples of such graphs.
The strong amplifier we present has fixation probability $1-\tilde O(n^{-1/3})$, and the strong suppressor has fixation probability $\tilde O(n^{-1/4})$.
Both graph constructions are surprisingly simple.
We also prove a general upper bound of $1-\tilde \Omega(n^{-1/3})$ on the fixation probability of any undirected graph.
Hence, our strong amplifier is existentially optimal.

\end{abstract}

%

\section{Introduction}

Evolutionary dynamics have been traditionally studied in the context of well-mixed populations.
One of the simplest and most influential stochastic models of evolution is the \emph{Moran process}, introduced in the 1950s~\cite{moran1958random,moran1962statistical}.
It is based on a finite population of fixed size $n$, where each individual is classified either as a \emph{mutant} or a \emph{non-mutant}.
Each type is associated with a \emph{fitness} parameter, which determines its rate of reproduction.
Mutants have relative fitness $r>0$, as compared to non-mutants, whose fitness is~1
(the mutation is \emph{advantageous} if $r>1$).
In each step of the process, a random individual is chosen for reproduction with a probability proportional to its fitness, and produces an offspring which is a copy of itself.
This offspring replaces an individual selected uniformly at random among the population.
The process ends when the mutation reaches \emph{fixation} (all individuals are mutants), or \emph{extinction}.

The Moran process models the tension between the evolutionary mechanisms of \emph{natural selection} and \emph{genetic drift}.
Selection is captured by the model's property that mutations with larger fitness are more likely to reach fixation; while genetic drift is captured in the probabilistic nature of the model, which allows for advantageous mutants get extinct and for less advantageous ones get fixed.
The main quantity of interest in this model is the probability by which the mutation reaches fixation.
When there are $i$ mutants initially, this probability is
\[    \frac{1-r^{-i}}{1-r^{-n}}, \text{ if } r\neq1, \text{\quad and \quad} \frac in, \text{ if } r=1.
\]

This process does not take into account any spatial or other structure that the evolutionary system may have.
To address this issue,
Lieberman et a.~\cite{lieberman2005evolutionary}
proposed a refinement of the model, where individuals are the vertices of an $n$-vertex graph $G$; the graph can be undirected or directed.
As in the original process, the random vertex that reproduces in a step is selected with probability proportional to its fitness.
However, the individual that is replaced by the offspring is selected uniformly at random among the neighbors in $G$ of the reproducing vertex, rather than among the whole population (if $G$ is directed, the outgoing neighbors are considered).
The original Moran process corresponds to the special case where $G$ is the complete graph~$K_n$.

If $G$ is (strongly) connected then the process will terminate with probability 1, reaching either fixation or extinction, as in the original Moran process.
For the class of regular graph, the fixation probability is identical to that in the original process, depending only on $r$ and the number $i$ of initial mutants.
(This is in fact true for a more general class of weighted graphs~\cite{lieberman2005evolutionary}; see also earlier works~\cite{maruyama1974simple,slatkin1981fixation}.)
However, in general graphs, the probability of fixation depends also on the structure of $G$ and the exact \emph{set} of initial mutants.
A standard aggregate measure, denoted $\rho_G(r)$, is the fixation probability for a single initial mutant, placed at a vertex chosen uniformly at random.
For regular graphs, $\rho_G(r)$ is $\frac{1-r^{-1}}{1-r^{-n}}$ if $r\neq 1$, and $\frac1n$ if $r=1$.
As $n$ grows (keeping $r$ fixed), this probability converges to $1-\frac1r$ if $r>1$, and to $r^{n-1}$ if $r<1$.

For any graph $G$, we have $\rho_G(1) = \frac1n$, by symmetry, as the initial mutant is placed on a random vertex; and for $r<1$, we have $\rho_G(r)\leq \rho_G(1)$, by monotonicity~\cite{diaz2016absorption}.
In the following discussion, we focus mainly on the case of $r>1$, in which the mutation is advantageous.

A graph $G$ is an \emph{amplifier of selection} if $\rho_G(r) > \rho_{K_n}(r)$, for
$r>1$.
In such a graph, it is more likely for an advantageous mutant to reach fixation than in the well-mixed setting, thus $G$ enhances natural selection and
suppresses random drift.
On the other hand, if $\rho_G(r) < \rho_{K_n}(r)$ then $G$ is a \emph{suppressor of selection}, that enhances random drift.
The $n$-vertex star (i.e., the complete bipartite graph $K_{1,n-1}$) is an example of an amplifier, with $\rho_{K_{1,n-1}}(r)$ converging to $1-\frac{1}{r^2}$ as $n$ grows (and $r>1$ stays fixed)~\cite{lieberman2005evolutionary,broom2008analysis}.
An example of a suppressor is more tricky to find.
Mertzios et al.~\cite{mertzios2013natural}
described an $n$-vertex graph $G_n$ for which $\lim_{n\to\infty}\rho_{G_n}(r) \leq \frac12\big(1-\frac1r\big)$, for $1<r<\frac43$.

If we do not restrict ourselves to connected undirected graphs or strongly-connected digraphs, but consider also weakly-connected digraphs, then finding a suppressor is easy~\cite{lieberman2005evolutionary}:
For any weakly-connected $G$ with a single source, $\rho_{G}(r) = \frac1n$ for any $r>0$; if $G$ has more sources then $\rho_{G}(r)=0$.

A problem that has received at lot of attention ever since the model was introduced~\cite{lieberman2005evolutionary},
is to identify graph structures for which the probability $\rho_{G}(r)$ attains extreme values.
Let $\GG = \{G_n\}_{n\in I}$, where $I\subseteq \ZZ^+$, be an infinite family of graphs indexed by their number of vertices.
We say that $\GG$ is a \emph{strong amplifier of selection} if $\lim_{n\to\infty} \rho_{G_n}(r) = 1$, for any $r>1$ that is not a function of $n$; while $\GG$ is a \emph{strong suppressor of selection} if $\lim_{n\to\infty} \rho_{G_n}(r) = 0$ instead.
That is, a strong amplifier guarantees that an advantageous mutant will almost surely reach fixation, even if its fitness advantage is small;
whereas in a strong suppressor, the probability for a mutant to be fixed goes to 0,
however large its fitness advantage. \gtodolight{selective advantage? amplifier: The terminology comes from the fact that the selective advantage of the mutant is being ``amplified" in such graphs.}

Lieberman et al.~\cite{lieberman2005evolutionary} proposed two directed graph families of strong amplifiers, call `superstar' and `metafunnel.'
Formal bounds for the fixation probabilities of these graphs were only recently proved, by
Galanis et al.~\cite{galanis2016amplifiers}
(see also~\cite{diaz2013fixation,jamieson2015fixation}).
Precisely, in~\cite{galanis2016amplifiers} a variant of the superstar, called `megastar,' was proposed, and for this family $\{G_n\}$ of graphs it was shown that
$\rho_{G_n} (r) = 1 - \tilde \Theta(n^{-1/2})$, for any $r> 1$ not a function of $n$
(the $\tilde \Theta$ notation hide a factor polynomial in $\log n$).
It was also shown that the megastar is a stronger amplifier than both the superstar and metafunnel.

All the above strong amplifiers are directed graphs, and it appears that they rely on the directionality of their edges in a critical way to achieve high fixation probability.
So far, no undirected families of strong amplifiers have been known.
In fact, no undirected graph family has been known to have fixation probability asymptotically better than the $1-\frac1{r^2}$ probability of the $n$-vertex star.
A related problem for which progress \emph{has} been achieved is to find undirected graph families such that for a significant fraction of the vertices, if the mutation start from any of those vertices it will reach fixation almost surely.
Mertzios and Spirakis~\cite{mertzios2013strong}
identified such a graph family, with the property that for a \emph{constant} fraction of vertices, the probability of fixation is $1-O(1/n)$, for any $r>5$.
Note that if this property were true for a $1-o(1)$ fraction of the vertices (and for any $r>1$), it would imply that the family is a strong amplifier.

Little progress has been made on strong suppressors.
No strong suppressor families of undirected or strongly-connected directed graphs are known.
The only relevant result is that for any $\omega(n) \leq f \leq n^{1/2}$, there is an undirected graph family such that for a $1/f$ fraction of vertices, the probability of fixation when starting from one of those vertices is $O(f/n)$, for any $r$ not a function of $n$~\cite{mertzios2013strong}.

Regarding general bounds on the fixation probability, it is known that for any strongly-connected digraph $G$ with $n$ vertices, $\frac{1}{n} \leq \rho_G(r)\leq 1-\frac{1}{r+n}$, for $r>1$~\cite{diaz2014approximating}.
For the case of undirected graphs, the upper bound has been improved to
$\rho_G(r) = 1-\Omega\big(n^{-2/3}\big)$,  when $r$ is not a function of $n$~\cite{mertzios2013strong}.

Reducing the gap between these upper and lower bounds and the fixation probability of the best known amplifiers and successors, respectively, has been an intriguing problem.
In particular, the lack of progress in the search for undirected strong amplifiers and (undirected or strongly-connected) strong suppressors, has lead to the belief that they do not exist.
Mertzios et al.~\cite{mertzios2013natural,spirakis2013talk} asked whether there exist functions $f_1 (r) > 0$ and $f_2 (r) < 1$ (independent of $n$), such that for any undirected $n$-vertex graph $G$, $f_1(r) \leq \rho_G(r) \leq f_2(r)$.
And at least for the upper bound part, it has been independently suggested that it is true~\cite{goldberg2014talk,diaz2014talk}.

\subsection{Our Contribution}

In this paper, we finally resolve the above questions.
Contrary to the general belief, we show that both undirected strong amplifiers and suppressors exist, and provide the first examples of such graphs.
These examples are strikingly simple, and to some extent ``natural.''
Moreover, we show that our strong amplifier is optimal: We prove a general upper bound on the fixation probability of any undirected graph, which is matched (modulo low-order factors) by the fixation probability  of our amplifier.

Our strong amplifier is a two-parameter family of graphs $A_{n,\epsilon}$.
Parameter $n$ determines the size of the graph, while $1+\epsilon$ is the minimum fitness $r$ for which we guarantee almost sure fixation.
Parameter $\epsilon > 0$ can be a function of $n$ that converges to $0$ as $n$ grows; for convenience we also assume that $\epsilon \leq 1$.

\begin{theorem}
    \label{thm:amplifiers-intro}
    There is a family of undirected graphs $\{A_{n,\epsilon}\}_{n\geq1, 0 < \epsilon \leq 1}$, such that graph $A_{n,\epsilon}$ has $\Theta(n)$ vertices, and  $\rho_{A_{n,\epsilon}}(r) = 1-O\left(\frac{\log n}{\epsilon n^{1/3}}\right)$ for $r \geq 1+\epsilon$.
\end{theorem}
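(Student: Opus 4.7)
The plan is to construct $A_{n,\epsilon}$ as a two-level graph: a small ``core'' of size $s=\Theta(n^{2/3})$ that is internally well-connected (e.g., a bounded-degree expander), with $\Theta(n^{1/3})$ degree-one ``leaves'' attached to each core vertex, giving $\Theta(n)$ vertices in total. Each leaf then has a unique core neighbor of degree $\Theta(n^{1/3})$, and almost all vertices are leaves, so with probability $1-O(n^{-1/3})$ the initial mutant is a leaf. I would decompose the Moran process into three phases and bound the failure probability of each; the three bounds will sum to the claimed $O(\log n/(\epsilon n^{1/3}))$.

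\emph{Phase 1 (escape).} The unique core neighbor $c$ of the initial-mutant leaf $v$ overwrites $v$ at rate $1/\deg(c)=\Theta(n^{-1/3})$, while $v$ reproduces at rate $r\geq 1+\epsilon$ and whenever it does it infects $c$. A gambler's ruin comparison then yields that $c$ becomes mutant before $v$ is overwritten with probability $1-O(n^{-1/3}/\epsilon)$. \emph{Phase 2 (core takeover).} I would track the number of mutant core vertices as a biased random walk: in the spirit of the classical star analysis that gives the $1-1/r^2$ bound, each core vertex rapidly toggles based on the fitness of its $\Theta(n^{1/3})$ attached leaves, which imposes an effective bias of $1+\Omega(\epsilon)$ per step on the mutant-core count. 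A gambler's ruin estimate over the $s$ core vertices then yields takeover of the core from a single mutant core vertex with probability $1-O(\log n/(\epsilon n^{1/3}))$, where the $\log n$ arises from a union bound over repeated visits of the walk to small mutant-counts. \emph{Phase 3 (sweep).} Once a constant fraction of the core is mutant, each remaining non-mutant leaf is overwritten by its predominantly-mutant core neighbor in $O(n)$ expected steps, and a standard martingale/coupling argument gives fixation with failure probability $n^{-\omega(1)}$.

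The main obstacle is Phase 2: each mutant core vertex is attacked by its $\Theta(n^{1/3})$ own non-mutant leaves at a rate that can exceed that core vertex's own reproduction rate, so a naive comparison of instantaneous rates gives the wrong sign of drift. Instead, one needs a quasi-stationary argument that averages a core vertex's $\{0,1\}$-state over the fast leaf dynamics and reads off the induced drift on the mutant-core count from this average, and then couples the true process to a biased walk with that drift. The $\epsilon^{-1}$ factor in the final bound comes from the gambler's ruin escape probability for a walk of bias $1+\Omega(\epsilon)$, and the $\log n$ factor absorbs the concentration overhead needed to control tail events across the $\Theta(n^{2/3})$ core vertices; summing Phase 1 and Phase 2 losses and absorbing Phase 3's negligible contribution gives the $O(\log n/(\epsilon n^{1/3}))$ bound stated in the theorem.
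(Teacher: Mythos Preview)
Your two-level construction does not achieve the stated bound; in fact it is not a strong amplifier at all. The obstruction is precisely the one you flag in Phase~2 but do not resolve: a core vertex $c$ of degree $\Theta(n^{1/3})$, with $\Theta(n^{1/3})$ degree-one leaves and only $O(1)$ internal core edges, has its state dominated by its leaves, and the local process at the star around $c$ is essentially the Moran process on $K_{1,m}$ with $m=\Theta(n^{1/3})$. Starting from one mutant leaf, that process fixes with probability only $1-1/r^2\approx 2\epsilon$ and otherwise goes extinct, and extinction of the local star is global extinction up to negligible escapes along the sparse core edges. Concretely, the two-hop extinction bound the paper uses for its matching \emph{upper} bound (Claim~\ref{clm:zetalb}) already kills your graph: for a leaf $v$ with core neighbour $c$, almost all of $c$'s other neighbours are degree-one leaves, so $T_{v,c}=hvol(\Gamma_c\setminus\{v\})\approx n^{1/3}$ while $T_v=1/\deg_c\approx n^{-1/3}$, and the claim gives $\zeta_v\ge T_v\big/\bigl(T_v+2r^2/(2r+T_{v,c})\bigr)\to 1/(1+2r^2)$, a constant independent of $n$. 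Your proposed quasi-stationary averaging does not help: it correctly tracks $c$'s instantaneous state given fixed leaf states, but the leaf states are themselves the slow variable, and the induced walk on the mutant-leaf count has step-bias only $r^2$, giving escape probability $1-1/r^2$ from a single mutant leaf rather than $1-o(1)$.

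The paper's construction repairs exactly this. It uses \emph{three} layers $U$--$V$--$W$: the degree-one leaves $U$ attach to an intermediate layer $V$ of $n^{2/3}$ vertices, and each $v\in V$ has only $n^{1/3}$ edges back to $U$ but $n^{2/3}$ edges into the core $W$, so $\deg_v\approx n^{2/3}$. The core $W$ is an $n^{2/3}$-regular expander on $\alpha n^{2/3}$ vertices (with $\alpha=3\log_{1+\epsilon}n$), and only an $O(1/\alpha)$ fraction of each $w$'s edges leave $W$. Two things follow. First, the leaf's neighbour now has degree $n^{2/3}$ rather than $n^{1/3}$, so the same two-hop bound gives $\zeta_u=O(r^{-2}n^{-1/3})$ instead of a constant, and the leaf survives long enough to drive many retries. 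Second---and this replaces your Phase~2 entirely---the core $W$ behaves like an almost-isolated regular expander, so a single mutant in $W$ reaches half of $W$ with probability $\Omega(\epsilon)$ by a direct birth--death coupling (Lemma~\ref{lem:expansion}); no averaging over leaf states is needed because the core vertices have essentially no leaves attached. The $\epsilon^{-1}$ factor then emerges from a retry recursion (Section~\ref{sec:proofAmplifier}), and your Phase~3 is handled by a general harmonic-volume fixation lemma (Lemma~\ref{lem:key}/Corollary~\ref{cor:hpfixation}) rather than an ad~hoc sweep.
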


By letting $\epsilon = 1/g(n)$, for any slowly increasing function $g(n)$, e.g., $g(n) = \log\log n$, we obtain a strong amplifier that achieves a fixation probability of $1-O(g(n)\, n^{-1/3}\log n)$ for any $r>1$ that is not a function of $n$.
On the other hand, letting $\epsilon = g(n)\,n^{-1/3}\log n$ yields a strong amplifier achieving a fixation probability of $1-O(1/g(n)) = 1-o(1)$ for $r \geq 1 + g(n)\,n^{-1/3}\log n$; i.e., we have almost sure fixation even when the mutation has an \emph{extremely small} fitness advantage, $r-1$, that is polynomial in $1/n$.

The best known \emph{directed} strong amplifier is the megastar, which has a fixation probability of $1-O(n^{-1/2}\log^{23}n)$, for $r>1$ not a function of $n$~\cite{galanis2015full}. 
This is better than the $1-O(n^{-1/3}\log n)$ fixation probability of our amplifier. 
Interestingly, we show that this difference is a result of an inherent limitation of undirected strong amplifiers, and, in fact, our amplifier achieves the best possible fixation probability (modulo logarithmic factors) among all undirected graphs (for $r>1$ independent of $n$).

\begin{theorem}
    \label{thm:upperBound}
    For any undirected graph $G$ on $n$ vertices, $\rho_G(r) = 1 -  \Omega\left(\frac{1}{r^{5/3} n^{1/3} \log^{4/3}n}\right)$ for $r>1$.
\end{theorem}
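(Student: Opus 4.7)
The plan is to prove the theorem by lower-bounding $1-\rho_G(r)=\tfrac{1}{n}\sum_{v\in V}q_v$, where $q_v:=1-\rho_v(r)$ is the extinction probability from a single initial mutant at $v$. Equivalently, the goal is $\sum_v q_v = \tilde\Omega(n^{2/3}/r^{5/3})$. The baseline per-vertex estimate is the one-step bound $q_v \ge w(v)/(w(v)+r)$ with $w(v):=\sum_{u\sim v}1/\deg(u)$, obtained by analyzing the first transition out of state $\{v\}$; crucially, $\sum_v w(v)=n$.

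I would split on the concentration of the weight distribution. Setting $H:=\{v:w(v)\ge r^{-2/3}\}$, the \emph{spread-out case} $|H|\ge n^{2/3}$ is immediate: for every $v\in H$ and $r>1$, the one-step bound yields $q_v \ge r^{-2/3}/(r^{-2/3}+r) \ge 1/(2r^{5/3})$, so $\sum_v q_v \ge |H|/(2r^{5/3}) \ge n^{2/3}/(2r^{5/3})$, matching the target directly. The choice of the threshold $r^{-2/3}$ (and consequently the $r^{5/3}$ exponent in the theorem) is determined by optimizing this single-step tradeoff between the per-vertex bound $w/(w+r)$ and the number of vertices above the threshold.

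In the \emph{concentrated case} $|H|<n^{2/3}$, the one-step bound is too weak for the $\Omega(n)$ low-weight vertices $L:=V\setminus H$ (it gives only $O(1/n)$ on a star, though the truth there is $\Theta(1/r^{2})$). My approach is to analyze several steps of the Moran process by coupling the cluster size, while it stays below a threshold $T\asymp(n/\polylog n)^{1/3}$, with a biased birth-death random walk. The key structural input is that a low-weight cluster has a small-weight boundary, so the walk spends many steps in a near-critical regime (despite the long-term drift toward fixation) and returns to $0$ before escape past $T$ with probability $\tilde\Omega(1/(r^{5/3}\, n^{1/3}))$ per starting vertex. Summing over the $\Omega(n)$ vertices in $L$ yields $\sum_v q_v = \tilde\Omega(n^{2/3}/r^{5/3})$.

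The main obstacle is executing the multi-step coupling in the concentrated case: establishing that the random-walk analogy holds uniformly for almost every low-weight starting vertex despite arbitrary graph structure requires controlling the boundary weight accumulated by the cluster as it evolves, and ruling out atypical fast escapes through high-degree vertices. A dyadic degree bucketing $V_k:=\{v:2^k\le\deg(v)<2^{k+1}\}$ and pigeonhole over the $O(\log n)$ buckets should isolate a ``typical'' sub-regime where the analysis applies, and the balance between the bucket size and the per-vertex extinction bound is what ultimately produces both the $n^{1/3}$ rate and the $\log^{4/3}n$ factor in the theorem.
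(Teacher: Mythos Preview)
Your spread-out case is correct. The concentrated case, however, has a genuine gap, and the paper's route is quite different from what you sketch.

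You propose to couple the cluster size with a biased birth--death walk up to a threshold $T\asymp n^{1/3}$, and you correctly identify that controlling the boundary weight of the evolving cluster is the obstacle. But this is not a technical detail to be filled in---it is the whole difficulty, and you offer no mechanism for it. Knowing $w(v)<r^{-2/3}$ for the \emph{starting} vertex says nothing about the $w$-values of the vertices the cluster subsequently absorbs; once the mutant reproduces onto a high-degree neighbor, that neighbor's harmonic weight may be large and the ``near-critical regime'' ends immediately. Your ``key structural input'' that a low-weight cluster keeps a small-weight boundary is neither stated precisely nor argued, and on a general graph there is no reason it should persist over $\Theta(n^{1/3})$ steps.

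The paper never tracks the process beyond two steps. It sharpens the one-step bound to a \emph{two-hop} bound,
\[
    \zeta_u \;\ge\; \frac{T_u}{\,T_u + \frac{r}{\deg_u}\sum_{v\in\Gamma_u}\frac{2r}{2r+T_{u,v}}\,},
    \qquad T_{u,v}=hvol(\Gamma_v\setminus\{u\}),
\]
which captures the event that $u$'s first offspring $v$ is killed by one of $v$'s \emph{other} neighbors before either of $u,v$ reproduces again. The crucial second ingredient is a \emph{bootstrapping} degree-distribution estimate: the number $N_d$ of vertices of degree at least $d$ satisfies $N_d = O(r^3 n^2\zeta^2 \log n / d)$, where $\zeta$ is the very quantity being bounded. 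Feeding this into the two-hop estimate on a single large dyadic bucket $V_\ell$ (your bucketing idea, but used to locate one bucket of size $\ge n/\log n$, not to tame a coupling) gives $\zeta_u \ge \Omega\bigl(1/(r^5 n\zeta^2\log^3 n)\bigr)$ for most $u\in V_\ell$. Summing and solving the resulting self-referential inequality $\zeta^3 \gtrsim 1/(r^5 n\log^4 n)$ produces the exponents $n^{1/3}$, $r^{5/3}$, and $\log^{4/3}n$ simultaneously. The $1/3$ thus comes from a cubic self-consistency, not from a cluster-size threshold.
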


The previous best bound was
$\rho_G(r) = 1-\Omega\big(n^{-2/3}\big)$, which assumed that $r$ is not a function of $n$~\cite{mertzios2013strong}.
The next theorem bounds the fixation probability for our strong suppressor.

\begin{theorem}
    \label{thm:suppressors-intro}
    There is a family of undirected graphs $\{B_n\}_{n\geq1}$, such that graph $B_{n}$ has $\Theta(n)$ vertices, and $\rho_{B_{n}}(r) = O\left(\frac{r^2\log n}{n^{1/4}}\right)$.
\end{theorem}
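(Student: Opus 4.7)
The plan is to exhibit a graph $B_n$ on $\Theta(n)$ vertices partitioned into a ``backbone'' of $k=\Theta(n^{1/4})$ high-degree vertices and a ``bulk'' of $\Theta(n)$ low-degree pendants attached to the backbone. The key structural feature is that fixation requires the mutation to sequentially infect every backbone vertex, while each such infection succeeds only with roughly constant probability. Since a random initial mutant lies in the bulk with probability $1-O(n^{-1/4})$, and the remaining case trivially contributes an $O(n^{-1/4})$ term to the fixation probability, I focus on bulk starts throughout.

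The analysis reduces to tracking a one-dimensional process $X_t$ counting the number of infected backbone vertices at time $t$. I plan to couple $X_t$ with a random walk $Y_t$ on $\{0,1,\ldots,k\}$ that is at worst mildly biased toward $k$, with an up/down ratio depending on $r$ but not on $n$. By the gambler's ruin formula, $\Pr[Y_t\text{ reaches }k\mid Y_0=1]=O(r^{O(1)}/k)=O(r^{O(1)}\cdot n^{-1/4})$. An extra logarithmic factor arises from a union bound over $O(\log n)$ ``restart'' phases, in which the infected backbone may temporarily drop to zero and must be rebuilt from scratch before the dynamics terminate one way or the other. Combining these observations and averaging over starting positions yields the claimed bound $\rho_{B_n}(r)=O(r^2\log n/n^{1/4})$.

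The principal technical difficulty is engineering the backbone-to-bulk wiring so that the effective bias of $X_t$ depends only on $r$, not on any positive power of $n$. A naive first attempt---attaching pendant stars of $\Theta(n^{3/4})$ leaves to each vertex of a clique on $k$ backbone vertices---yields a macroscopic up/down bias of exactly $r$, so gambler's ruin returns $1-1/r+o(1)$, not a suppressor. The correct construction must therefore arrange a structural cancellation: the fitness advantage $r$ enjoyed by a mutant backbone vertex when it reproduces onto a neighbor must be offset by an asymmetric tendency of the bulk to revert backbone mutants (for instance, by arranging that a backbone mutant is incident to disproportionately many non-mutant bulk vertices whose only reproductive target is the backbone mutant itself). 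Verifying this cancellation through a precise case analysis of every single-step transition, and packaging the result as a supermartingale potential function whose fluctuations are controlled using standard concentration for bounded-increment walks, is the main obstacle and will determine the precise choice of degrees in the definition of $B_n$.
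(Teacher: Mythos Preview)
Your proposal has a genuine gap, and more seriously the architecture itself cannot work. On the gambler's-ruin step: if $Y_t$ has a constant up/down ratio $\rho>1$ (``depending on $r$ but not on $n$''), then $\Pr[Y\text{ hits }k\mid Y_0=1]=(1-\rho^{-1})/(1-\rho^{-k})\to 1-\rho^{-1}=\Theta(1)$, not $O(r^{O(1)}/k)$; to get $O(1/k)$ you need $\rho\le 1$, and you yourself flag arranging this as ``the main obstacle'' without resolving it---so what remains is a plan, not a proof. Worse, with the bulk made of low-degree pendants the plan cannot be completed: a pendant mutant $u$ reproduces onto its backbone neighbour at rate $\propto r$ while being overwritten at rate $\propto 1/\deg_{\text{backbone}}\ll 1$, so $u$ is effectively immortal and keeps reseeding the backbone indefinitely. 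A harmonic-volume potential argument (the paper's Lemma~\ref{lem:key}) then gives $\rho_G^{\{u\}}(r)\ge 1-1/r$ for every minimum-degree vertex $u$, and since almost all your vertices are pendants this forces $\rho_{B_n}(r)\ge 1-1/r-o(1)$. In any undirected strong suppressor the minimum-degree vertices must be a \emph{vanishing} fraction of the graph---the opposite of your setup.

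The paper's construction is entirely different and elementary. It has three layers $U,V,W$ with $|U|=n^{3/4}$, $|V|=n$, $|W|=n^{1/2}$; each $u\in U$ has degree $n^{1/4}$, each $v\in V$ has one neighbour in $U$ and is adjacent to all of $W$ (so $\deg_v\approx n^{1/2}$), and each $w\in W$ has degree $n$. The minimum-degree vertices (those in $U$) are only a $\Theta(n^{-1/4})$ fraction. The typical initial mutant $v\in V$ has a \emph{lower}-degree neighbour $u\in U$, so $u$ overwrites $v$ before $v$ ever reproduces onto $u$ with probability $1-O(rn^{-1/4})$; meanwhile any offspring $v$ places on some $w\in W$ is surrounded by $n-1$ non-mutant $V$-vertices of smaller degree and is overwritten before it reproduces with probability $1-O(rn^{-1/2})$. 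A union bound over the $O(rn^{1/4}\log n)$ reproductions $v$ makes before dying gives extinction with probability $1-O(r^2n^{-1/4}\log n)$. There is no backbone walk and no gambler's ruin---just degree asymmetry ensuring the initial mutant and all its immediate offspring die within one generation.
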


This statement implies that for $r$ bounded by a constant (independent of $n$), the fixation probability is $O(n^{-1/2}\log n)$.
Moreover, the fixation probability is $o(1)$ as long as $r = o(n^{1/8}\log^{-1/2} n)$; i.e., we have almost sure extinction even when the mutation has an \emph{extremely large} fitness, that is polynomial in $n$.

Both our strong amplifier and suppressor are surprisingly simple.
For example, the amplifier is as simple as follows:
All vertices have low degree, $\delta = 1$, except for a small number of hubs of degree $\Delta\approx n^{2/3}$; there are $n^{2/3} \log n$ such hubs.
The low-degree vertices form an independent set, while the hubs induce a random regular graph.
A subset of $n^{2/3}$ ``intermediary" hubs \gtodolight{structural holes ?}
are used to connect the low-degree vertices with the rest of the graph, such that each low-degree vertex is connected to a random such hub.
The simplicity of our graphs gives hope that they may indeed capture some aspects of real evolutionary networks, unlike previous strong amplifiers for which no structures resembling them have been reported in nature~\cite{jamieson2015fixation}.

The analysis of our strong suppressor is elementary.
Our analysis of the strong amplifier is more elaborate, using mainly Markov chain techniques, but is still intuitive and relatively short (in particular, significantly shorter than the analysis for directed strong amplifiers~\cite{galanis2015full}).
A tool of independent interest we introduce is a general lower bound on the probability that fixation will be reached in a given graph, as a function of the \emph{harmonic volume} (a.k.a.\ ``temperature''~\cite{lieberman2005evolutionary}) of the current set of mutant vertices (Lemma~\ref{lem:key}).
This result implies a convenient early stopping criterion that can be used to speed up numerical approximations of the fixation probability via simulations (Corollary~\ref{cor:hpfixation}).
Similar heuristics based on the number of mutants have already been used in the literature but without formal proof (e.g., in~\cite{barbosa2010early}).

Our proof of Theorem~\ref{thm:upperBound} relies on a careful argument that bounds the number of vertices of degree greater than a given $k$ as a function of the fixation probability of the graph, and also bounds the probability of fixation for a given initial mutant vertex in terms of the two-hop neighborhood of the vertex.

As already mentioned, a main motivation for our work comes from population genetics.
In this field, the effect of the spatial (or other) structure of a population of organisms on the dynamics of selection has long been a topic of interest~\cite{maruyama1974simple,slatkin1981fixation,barton1993probability,whitlock2003fixation}.
Similar evolutionary mechanisms apply also in somatic evolution within multicellular organisms.
For example, tissues are expected to be organized as suppressors of selection so as to prevent somatic evolution that leads to cancer~\cite{nowak2003linear,michor2004dynamics,komarova2006spatial,werner2011dynamics};
while amplifiers of selection can be found in the affinity maturation process of the immune response~\cite{lieberman2005evolutionary}.
The Moran process can be viewed also as a basic, general diffusion model, for instance, of the spread of ideas or influence in a social network~\cite{shakarian2015diffusion}.
In this context, our results suggest network structures which are well suited for the spread of favorable ideas, and structures where the selection of such ideas is suppressed.

We should stress that our results depend on the specifics of the Moran process, and in general do not extend to other similar diffusion processes, as discussed in the next paragraph.

\subsection{Other Related Work}

Several works have been devoted to the numerical computation of the fixation probability on specific classes of graphs, such as instances of bipartite graphs and Erd\"os-R\'enyi random graphs~\cite{broom2009evolutionarySmall,masuda2009directionality,%
voorhees2013fixation,hindersin2015most}.

Besides the fixation probability, another quantity of interest is the \emph{absorption time}, that is, the expected time before either of the two absorbing states, fixation or extinction, is reached; a worst-case initial placement of the mutant is typically assumed.
D\'iaz et al.~\cite{diaz2014approximating} established polynomial upper bounds for the absorption time on undirected graphs.
They showed that this time is at most $\frac{n^3}{1-r}$ for $r<1$, at most $n^6$ for $r=1$, and at most $\frac{rn^4}{r-1}$ for $r>1$.
These results imply that there is a polynomial time approximation scheme for the fixation probability for $r\geq 1$ (for $r<1$ this probability can be exponentially small).
In~\cite{diaz2016absorption}, the bound for $r>1$  was refined for the case of regular graphs; the results apply also to directed strongly-connected regular graphs.
For these graphs the absorption time was show to be at least $\frac{(r-1)n\ln n}{r^2}$ and at most $\frac{2n\ln n}\phi$, where $\phi$ is the graph conductance.
It was also shown in~\cite{diaz2016absorption} that the absorption time can be exponential for non-regular directed graphs.

Several birth-and-death processes on graphs have been studied, beside the Moran process, which vary in the order in which the birth and death events occur in a step, and on whether the selection bias introduced by fitness is applied in the choice of the vertex that reproduces or the one that dies.
The Voter model~\cite{liggett1985interacting} is a well-known such process.
It has been observe that small differences in these processes have significant impact on the fixation probability~\cite{antal2006evolutionary,hindersin2015most,kaveh2015duality}.
In particular, for dynamics where, unlike in Moran process, the vertex to die is chosen first, it is known that simple strong suppressors exits, such as the star $K_{1,n-1}$ with fixation probability $\Theta(1/n)$, while there are no strong amplifiers~\cite{meshkinfamfard2016randomised}.
Although not elaborated in the current paper, our results carry over to the process where the vertex to reproduce is chosen uniformly at random, and its offspring replaces a neighbor chosen with probability inversely proportional to its fitness.
This corresponds to a setting where advantageous mutants do not reproduce more often, but they live longer.

More involved population dynamics have been studied in evolutionary game
theory in the context of well-mixed populations~\cite{hofbauer1998evolutionary,gintis2000game}, and more recently in populations on graphs \cite{szabo2007evolutionary,shakarian2012review}.
The fitness of an individual is no longer fixed (determined by its type: mutant or non-mutant), but is determined by the expected payoff for the individual when playing a two-player game against a randomly chosen neighbor. In this game mutants play with one strategy and non-mutants with another.

Finally, similar stochastic processes have been use to model dynamics other than evolutionary ones, such as the spread of influence in social networks~\cite{kempe2003maximizing}, the spread of epidemics~\cite{hethcote2000mathematics}, the emergence of monopolies~\cite{berger2001dynamic}, or interacting particle systems~\cite{liggett1985interacting}.

\paragraph{Road-map:} In Sec.~\ref{sec:prelims} we fix some definitions and notation.
We provide the description and analysis of our strong suppressor in Sec.~\ref{sec:suppressors}, and of our strong amplifier in Sec.~\ref{sec:amplifiers}.
In Sec.~\ref{sec:upperbound} we prove a universal upper bound on the fixation probability, and we conclude in Sec.~\ref{sec:conclusion}.

\section{Preliminaries}
\label{sec:prelims}

Let $G=(V,E)$ be an undirected graph with vertex set $V$ and edge set $E$, and let $n=|V|$.
For each vertex $u\in V$, $\Gamma_v$ denotes the set of vertices adjacent to $v$, and $\deg_u = |\Gamma_v|$ is the degree of $u$.
For any set $S\subseteq V$, the \emph{volume} of $S$ is $vol(S) = \sum_{u\in S}\deg_u$.
We define the \emph{harmonic volume} of $S$ as
\[
    hvol(S)=\sum_{u\in S}(\deg_u)^{-1}.
\]
The \emph{conductance} of $G$ is
\[
    \phi(G) = \min_{S\subseteq V\colon 0 < vol(S)\leq vol(V)/2}\frac{|E(S,V\setminus S)|}{vol(S)},
\]
where $E(S,V\setminus S)$ is the set of edges crossing the cut $\{S,V\setminus S\}$.
By $G[S]$ we denote the \emph{induced subgraph} of $G$ with vertex set $S$ and edge set consisting of all edges in $E$ with both endpoints in $S$.

The Moran process on $G=(V,E)$, with fitness parameter $r>0$, is a Markov chain $M_0,M_1,\ldots$ with state space $2^V$;
$M_t$ is the set of \emph{mutants} and $V\setminus M_t$ the set of \emph{non-mutants} after the first $t$ steps.
In each step $t\geq 1$, if $M_{t-1} = S$ then:
(1)~a vertex $u\in V$ is chosen at random, such that $u$ is selected with probability
$\frac{r}{r|S| + (n-|S|)}$ if $u\in S$,
and
$\frac{1}{r|S| + (n-|S|)}$ if $u\notin S$; then
(2)~another vertex $v\in\Gamma_u$ is chosen uniformly at random; and
(3)~the chain moves to state $M_t = M_{t-1}\cup \{v\}$ if $u\in M_{t-1}$, or to $M_t = M_{t-1}\setminus \{v\}$ if $u\notin M_{t-1}$.
We say that in this step, $u$ \emph{reproduces} on $v$ and $v$ \emph{dies}.
It follows that for a given pair of adjacent vertices $u$ and $v$, if $M_{t-1}=S$ then $u$ reproduces on $v$ in step $t$ with probability $\frac{r/\deg_u}{r|S| + (n-|S|)}$ if $u\in M_{t-1}$, and $\frac{1/\deg_u}{r|S| + (n-|S|)}$ if $u\notin M_{t-1}$.
The states $V$ and $\emptyset$ are absorbing, and are called \emph{fixation} and \emph{extinction}, respectively.
By $\rho_G^S(r)$ we denote the probability that fixation is reached when the initial set of mutants is $M_0=S$, and by $\zeta_G^S(r)$ the probability that extinction is reached when $M_0=S$.
We let $\rho_G(r) = \frac{1}{n}\sum_{s\in V} \rho_G^{\{s\}}(r)$ denote the probability that fixation is reached when there is a single mutant initially, chosen uniformly among all vertices; we call $\rho_G(r)$ the \emph{fixation probability} of $G$.
Similarly, we let $\zeta_G(r) = \frac{1}{n}\sum_{s\in V} \zeta_G^{\{s\}}(r)$, and refer to is as the \emph{extinction probability} of $G$.
If $G$ is connected then $\rho_G^S(r) + \zeta_G^S(r) = 1$, and thus $\rho_G(r) + \zeta_G(r) = 1$.
In the following we will always assume a connected $G$.


To simplify exposition we will often treat real numbers as integers, by implicity rounding their value up or down.
By $\log x$ we denote the logarithm base 2, while $\ln x$ denotes the natural
logarithm.

\section{Strong Suppressor}
\label{sec:suppressors}

\begin{figure}
\centering
\includegraphics[trim=0pt 90pt 0pt 100pt,clip=true, width=0.8\textwidth]{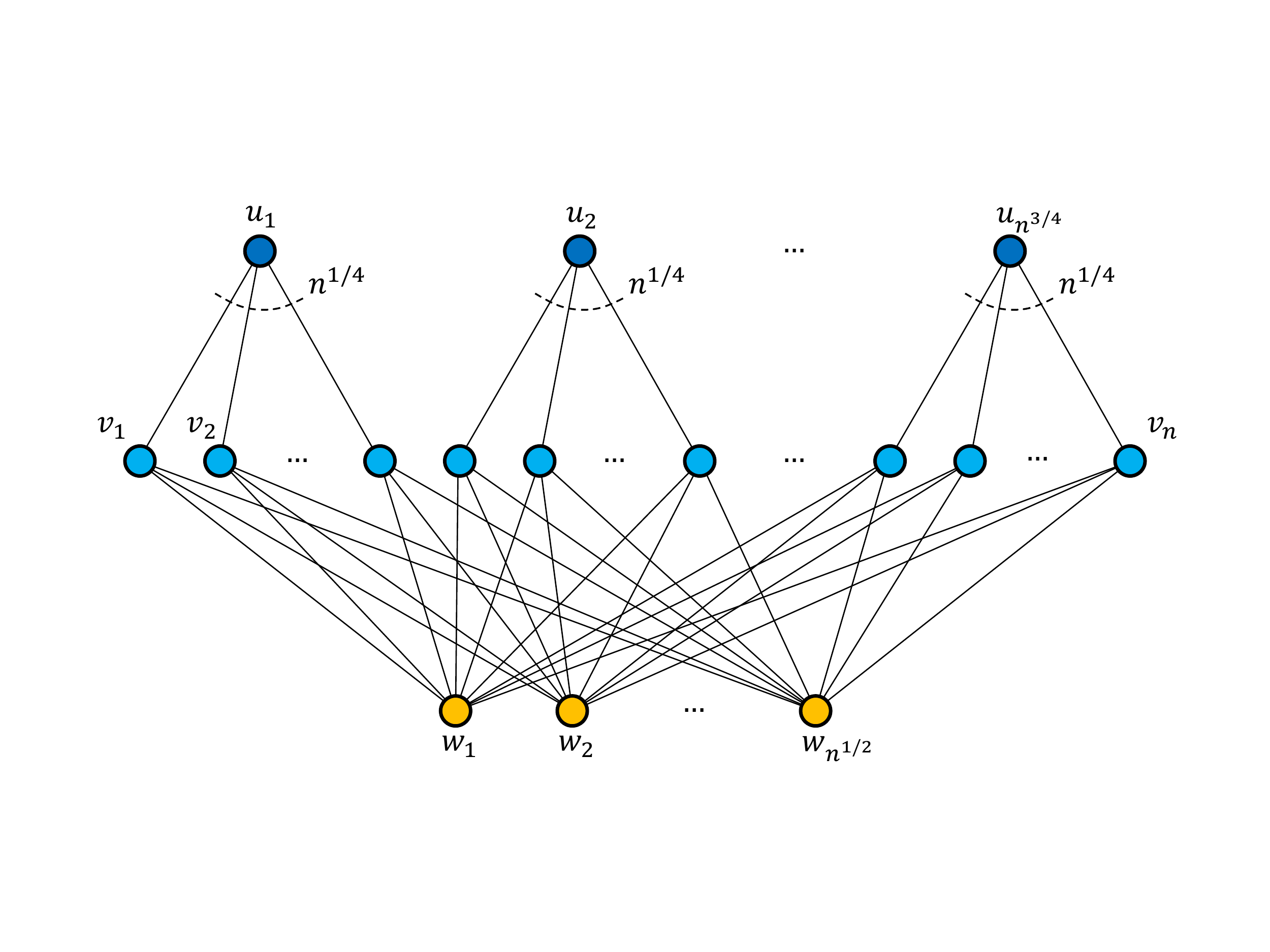}
\caption{Graph $B_n$ is a strong suppressor with fixation probability $O(r^2n^{-1/4}\log n)$.}
\label{fig:suppressor}
\end{figure}

Let $B_{n}$ be the following graph, illustrated in Fig.~\ref{fig:suppressor}.
The vertex set is the union of three disjoint sets of vertices $U$, $V$, and $W$, where $|U| = n^{3/4}$, $|V| = n$, and $|W| = n^{1/2}$.
So, the total number of vertices is $N = n + n^{3/4}+n^{1/4}$.
Edges exist only between vertices in $U$ and $V$ and between vertices in $V$ and $W$: each $u \in U$ is adjacent to $n^{1/4}$ distinct vertices in $V$, while each $v \in V$ is adjacent to a single vertex in $U$ and to all vertices in $W$ (i.e., the induced subgraph $B_n[V\cup W]$ is a complete bipartite graph $K_{n,n^{1/2}}$).


\begin{theorem}
    \label{thm:suppressors}
    The fixation probability of $B_{n}$ is
    $\rho_{B_n}(r) = O\left(\frac{r^2\log n}{n^{1/4}}\right)$.
\end{theorem}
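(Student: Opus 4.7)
I plan to decompose $\rho_{B_n}(r)=\frac{1}{N}\sum_{s\in U\cup V\cup W}\rho^{\{s\}}_{B_n}(r)$ by the location of the initial mutant $s$. Using the trivial bound $\rho^{\{s\}}_{B_n}\le 1$, the contributions from $s\in U$ and $s\in W$ are bounded by $|U|/N=O(n^{-1/4})$ and $|W|/N=O(n^{-1/2})$ respectively, both already within the claimed bound. The substance of the proof is therefore to show that $\rho^{\{v\}}_{B_n}(r)=O(r^2\log n/n^{1/4})$ for every $v\in V$.

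The governing idea is that $U$ acts as a bottleneck: its $n^{3/4}$ vertices of low degree $n^{1/4}$ are each attached to $V$-vertices of much higher degree $n^{1/2}+1$, so converting them is hard. Writing $\partial^+(S)=\sum_{x\in S}|\Gamma_x\setminus S|/\deg_x$ and $\partial^-(S)=\sum_{x\notin S}|\Gamma_x\cap S|/\deg_x$, the one-step probabilities of $|M_t|$ moving up/down satisfy $p_+(S)/p_-(S)=r\cdot\partial^+(S)/\partial^-(S)$. Evaluating at the ``canonical'' state $S=V\cup W$ gives
\[
\frac{p_+(V\cup W)}{p_-(V\cup W)}=\frac{r\cdot n/(n^{1/2}+1)}{n^{3/4}}=\Theta\!\left(\frac{r}{n^{1/4}}\right),
\]
and a similar computation shows that the same $\Theta(r/n^{1/4})$ bound holds uniformly over all $S\supseteq V\cup W$: mutants $u\in U$ have all-mutant neighbours (contributing $0$ to $\partial^+(S)$), while every non-mutant $u\in U$ still contributes exactly $1$ to $\partial^-(S)$.

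To turn this pointwise downward drift into an absorption-probability bound, I would apply a gambler's-ruin-style estimate to the sequence of state-changing steps restricted to the regime $V\cup W\subseteq M_t$. From any state in this regime, with probability $1-\Theta(r/n^{1/4})$ the next change drops the walk out of the regime (some $v\in V$ dies) before any further $u\in U$ is converted, so a single ``attempt'' at fixation from the canonical region succeeds with probability at most $O(r/n^{1/4})$. Multiplying by an initial-spread factor of $O(r)$ that bounds the probability that a single mutant at $v$ ever reaches the canonical region, and by an $O(\log n)$ union bound over the number of distinct attempts possible before absorption (controlled via the polynomial absorption-time bound of D\'iaz et al.), yields the claimed $O(r^2\log n/n^{1/4})$.

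The hard part will be ruling out ``shortcut'' trajectories that reach fixation without ever passing through the canonical regime: inspection of the drift formula shows that states like $S=V$ (saturated $V$, sparse $W$) have $p_+(S)/p_-(S)=\Theta(rn^{1/4})$, which points \emph{upward}. I would handle this by exploiting that $W$-vertices have the largest degree in $B_n$ (namely $n$), so $|M_t\cap W|$ equilibrates on a much faster timescale than $|M_t\cap U|$ can grow; a coupling or hitting-time argument should then show that every non-extinction trajectory almost surely passes through (or near) the canonical region before fixation, reducing the analysis to the clean gambler's ruin above.
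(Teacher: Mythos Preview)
Your decomposition by the location of the initial mutant is correct and matches the paper: only the case $s=v\in V$ matters. But from that point on your strategy diverges sharply from the paper's, and the divergence introduces real gaps.

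The paper's argument is entirely \emph{local} and never touches anything like your canonical regime $V\cup W\subseteq M_t$. It shows directly that, starting from a single mutant $v\in V$, the mutation goes extinct before it spreads anywhere: (i) $v$'s unique neighbour $u\in U$ has degree $n^{1/4}\ll\deg_v$, so $u$ reproduces on $v$ before $v$ reproduces on $u$ with probability $1-O(r n^{-1/4})$; (ii) with high probability this happens within $\lambda=\Theta(N n^{1/4}\log n)$ steps, during which $v$ reproduces at most $\mu=\Theta(r n^{1/4}\log n)$ times; (iii) each offspring of $v$ landing on some $w\in W$ is almost surely killed before it reproduces, because $\deg_w=n$ while the $n-1$ other neighbours of $w$ in $V$ each hit $w$ at rate $1/\deg_v\approx n^{-1/2}$, giving survival probability $O(r n^{-1/2})$ per offspring. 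A union bound over the $\mu$ offspring yields $O(r\cdot r n^{1/4}\log n\cdot n^{-1/2})=O(r^2 n^{-1/4}\log n)$, which is exactly the target. No gambler's ruin, no global drift, no passage through a saturated state.

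Your approach, by contrast, has two concrete gaps. First, the ``$O(\log n)$ union bound over the number of distinct attempts'' is unjustified: the D\'iaz et al.\ absorption-time bound is polynomial in $n$, and there is no reason the process enters your canonical regime only logarithmically often. Once you exit the regime by killing a single $v$, re-entry can be almost immediate (most of $V\cup W$ is still mutant), so the number of entries could easily be polynomial; multiplying your per-attempt bound $O(r/n^{1/4})$ by a polynomial kills the argument. Second, the ``shortcut trajectories'' issue you flag is not a technicality but the heart of the matter, and your sketch does not resolve it: you note that at $S=V$ the drift is strongly \emph{upward}, so the process could in principle accumulate many $U$-mutants before $W$ ever saturates, and a vague timescale-separation claim is not a proof.

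More fundamentally, the mechanism that makes $B_n$ a suppressor is the local $u$--$v$ edge at the single-mutant stage, not a global bottleneck at $V\cup W$-saturation. The paper shows the process dies before leaving a two-mutant neighbourhood of $v$; your canonical regime is never approached. I would abandon the drift framework here and redo the $v\in V$ case by tracking only $v$, its $U$-neighbour, and the fate of each $W$-offspring, as outlined above.
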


The proof is fairly straightforward, and can be found in Sec.~\ref{sec:proofSuppressor}. Here we give a brief sketch.
For simplicity we assume  $r$ is bounded by a constant (independent of $n$), and argue that $\rho_{B_n}(r) = O(n^{-1/4}\log n)$.

The initial mutant $s$ belongs to set $V$ with probability $|V|/N = 1 - \Theta(n^{-1/4})$.
Suppose $s=v\in V$ and let $u$ be $v$'s neighbor in $U$.
Since $\deg_u\ll \deg_v$, it is highly likely that $u$ will reproduce on $v$ before $v$ reproduces on $u$ for the first time; the probability is $\Theta\big(\frac{1/\deg_u}{1/\deg_u+1/\deg_v}\big) = 1-\Theta(n^{-1/4})$.
The number of steps before $u$ reproduces on $v$ is at most $\lambda = \Theta(N\deg_u\log n)$ with probability at least $1-O(n^{-1/4})$.
In this interval, $v$ reproduces no more than $\mu = \Theta(\lambda/N) = \Theta(n^{1/4}\log n)$ times with probability at least $1-O(n^{-1/4})$.
An offspring of $v$ on a vertex $w\in W$ is, however, very unlikely to manage to reproduce before it dies, as $w$ has much higher degree than its $n$ neighbors in $V$; the probability is $\Theta\big(\frac1{1+n/\deg_v}\big)=\Theta(n^{-1/2})$.
So, the probability none of at most $\mu$ offspring of $v$ reproduces is $1-O(\mu n^{-1/2}) = 1-O(n^{-1/4}\log n)$.
Combining the above yields that with probability $1-O(n^{-1/4}\log n)$, the initial mutant is in $V$ and dies before reproducing on its neighbor in $U$, while all its offsprings in $W$ die before reproducing, as well; hence the mutation gets extinct.

We point out that $B_n$ is very different than previously known examples of (\emph{non-strong}) suppressors, in particular, the undirected suppressor proposed in~\cite{mertzios2013natural}, which consists of a clique $K_n$, an induced cycle $C_n$, and a matching between them.

\subsection{Proof of Theorem~\ref{thm:suppressors}}
\label{sec:proofSuppressor}

We compute a lower bound on the probability of the following event which implies extinction:
The initial mutant $s$ belongs to set $V$, the neighbor $u\in U$ of $s$ reproduces on $s$ before $s$ reproduces on $u$ for the first time, and until $u$ reproduces on $s$, whenever $s$ reproduces on a vertex $w\in W$, the offspring dies before it reproduces and also before $s$ has another offspring.

The event above implies that the total number of mutant vertices in the graph is never greater than~2.
This allows us to make the following technical modification to the random process when computing the probability of that event:
Whenever there are more than 2 mutants in the graph, the fitness of mutants drops from $r$ to 1.
This is convenient when we compute the probability by which each vertex is chosen to reproduce in a step, as it implies that the sum of the fitness of all vertices is always between $N$ and $(N-2) + 2r$.

Next we describe the precise events we consider, and compute their probability.
Suppose that the initial mutant $s$ is a given vertex $v\in V$.
Let $u$ denote the neighbor of $v$ in $U$.
Let $t_{u\to v}$ be the first step in which $u$ reproduces on $v$, and define $t_{v\to u}$ likewise.
In each step, the probability that  $u$ reproduces on $v$ is at least proportional to $1/\deg_u$, while the probability that  $v$ reproduces on $u$ is at most proportional to $r/\deg_v$.
It follows that
\[
    \Pr[t_{u\to v} < t_{v\to u}]
    \geq
    \frac{1/\deg_u}{ 1/\deg_u + r/\deg_v}
    =
    1 - O\left(r n^{-1/4}\right).
\]
The actual probability that $u$ reproduces on $v$ in a given step is at least $\frac{1/\deg_u}{N-2 + 2r}$ (here we have used the assumption that if there are more than two mutants in the system their fitness is 1).
It follows that for $\lambda = \frac14(N -2 + 2r)\deg_u\ln n$,
\[
    \Pr[t_{u\to v} > \lambda]
    \leq
    \left(1-\frac{1/\deg_u}{N -2+ 2r}\right)^{\lambda}
    \leq
    e^{- \frac{\lambda/\deg_u}{N -2+ 2r}}
    =
    n^{-1/4}.
\]
Let $Z$ denote the number times that $v$ reproduces in the first $\lambda$ steps.
Since the probability that $v$ reproduces in a given step is at most $\frac{r}{N-1+r}$, it follows that
\[
    \Exp[Z]
    \leq
    \frac{\lambda r}{N-1+r}
    =
    \frac{r(N -2 + 2r)\deg_u\ln n}{4(N-1+r)}
    \leq
    \frac{r\deg_u\ln n}{2}.
\]
Applying a standard Chernoff bound then yields that for $\mu = r \deg_u\ln n$,
\[
    \Pr\left[Z > \mu\right]
    <
    e^{-\Omega(\mu)}
    <
    n^{-1}.
\]
When $v$ reproduces on a neighbor $w\in W$, the probability that after this point, at least one of $v$ or $w$ reproduces before some vertex from $V\setminus\{v\}$ reproduces on $w$ is at most
$
    \frac{2r}{2r+(n-1)/\deg_v}\leq \frac{2r\deg_v}{n-1}:
$
in each step, one of $v,w$ is chosen to reproduce with probability at most proportional to $2r$ and each of the $n-1$ vertices  $v'\in V\setminus\{v\}$ is chosen to reproduce on $w$ with probability at least proportional to $1/\deg_{v'} = 1/\deg_v$.
Let $X_i$, $i\geq 1$, be the indicator variable that is 1 if the $i$-th time $v$ reproduces, it does so on some vertex $w\in W$, and then $v$ or offspring $w$ reproduce before a vertex from $V\setminus\{v\}$ reproduces on $w$.
As argued above
\[
    \Pr[X_i = 1] \leq \frac{2r\deg_v}{n-1}.
\]
We now combine the above, and use a union bound to establish a lower bound on the probability that all the following events occur simultaneously:
(a)~$u$ reproduces on $v$ for the first time before $v$ reproduces on $u$;
(b)~$u$ reproduces on $v$ within the first $\lambda$ steps;
(c)~in these $\lambda$ steps, $v$ reproduces at most $\mu$ times; and
(d)~in each of the first $\mu$ times that $v$ reproduces on $W$, the vertex $w$ on which it reproduces as well as $v$ are not chosen to reproduce before some other vertex reproduces on $w$.
The probability of this joint event is
\begin{align*}
    \Pr\Big[
        t_{u\to v} < t_{v\to u}
        \ \wedge&\
        t_{u\to v} \leq \lambda
        \ \wedge\
        Z \leq \mu
        \ \wedge\
        \textstyle \sum_{1\leq i\leq \mu} X_i = 0
    \Big]
    \\&
    \geq
    1
    - O\left(r n^{-1/4}\right) -  n^{-1/4} - n^{-1} - \frac{2r\mu \deg_v }{n-1}
    \\&
    =
    1-O(r^2n^{-1/4}\ln n)
    ,
\end{align*}
because
$
    \frac{2 r\mu \deg_v}{n-1}
    =
    \frac{2r^2 \deg_u \deg_v \ln n}{n-1}
    =
    O(r^2n^{-1/4}\ln n).
$
The above joint event implies that the mutation gets extinct before it spreads to any vertex in $U\cup V\setminus \{v\}$.

So far we have assumed that the initial mutant $s$ is a given $v\in V$.
Since $s$ is chosen uniformly at random among all vertices of $B_n$, we have $\Pr[s=v] = 1/N$.
Combining this with the result we showed above, it follows that the extinction probability (for a random initial mutant) is at least
\[
    \left(1-O(r^2n^{-1/4}\ln n)\right)\cdot \frac{|V|}{N}
    =
    \left(1-O(r^2n^{-1/4}\ln n)\right)\cdot \left(1-O(n^{-1/4})\right)
    =
    1-O(r^2n^{-1/4}\ln n).
\]
This implies the claim of Theorem~\ref{thm:suppressors}.
\qed

\section{Strong Amplifier}
\label{sec:amplifiers}

\begin{figure}
\centering
\includegraphics[trim=0pt 60pt 0pt 100pt,clip=true, width=0.8\textwidth]{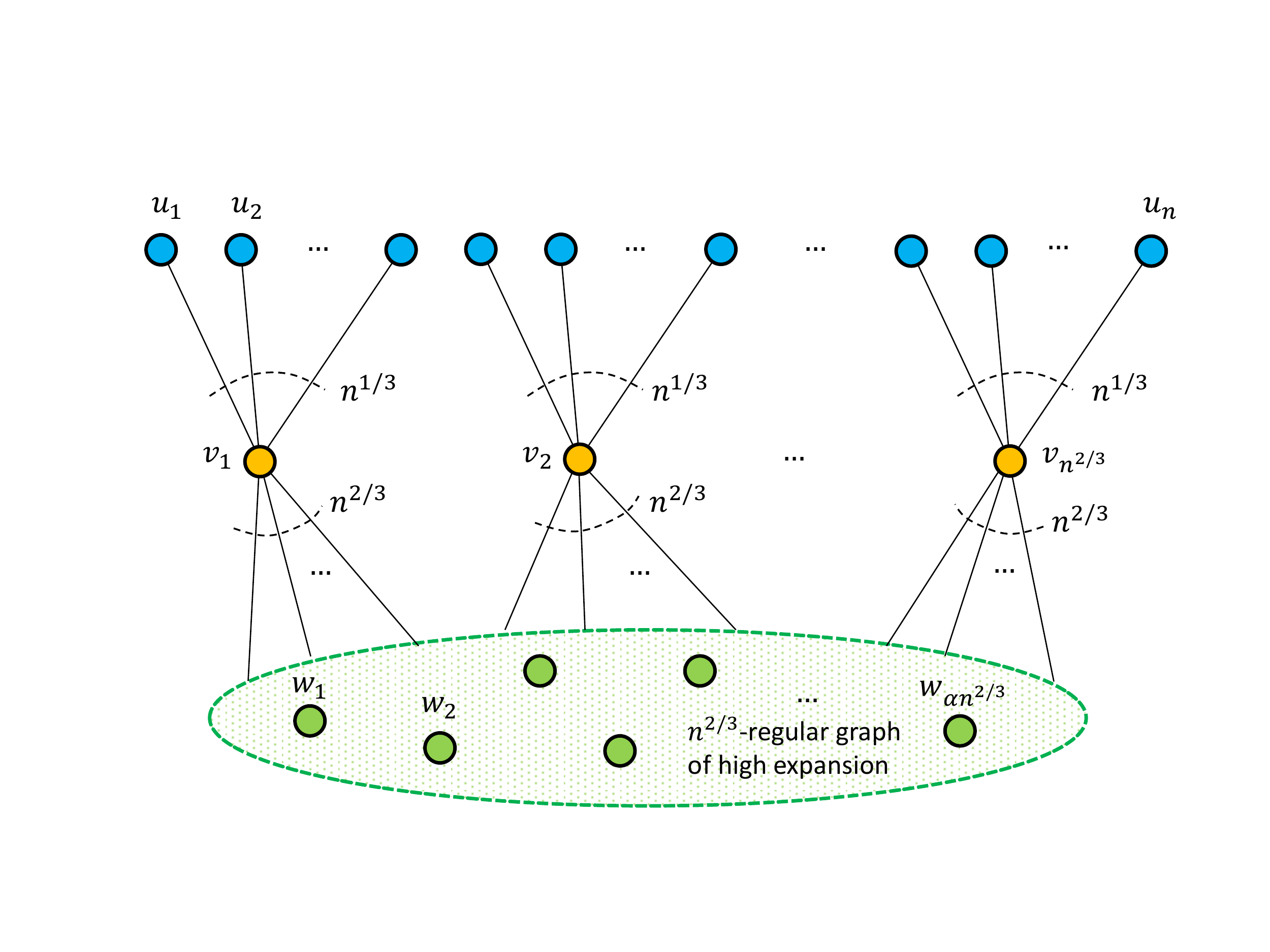}
\caption{Graph $A_{n,\epsilon}$ is a strong amplifier with fixation probability $1-O\big(\epsilon^{-1} n^{-1/3}\log n\big)$,
for $r \geq 1+\epsilon$.
We have $0<\epsilon\leq 1$ and $\alpha = 3\log_{1+\epsilon} n$.
The subgraph induced by $W = \{w_1,w_2,\ldots\}$ is an $n^{2/3}$-regular expander, and each vertex $w_i$ is also adjacent to $\alpha^{-1} n^{2/3}$ vertices in $V = \{v_1,v_2,\ldots\}$.
Although not shown in the figure, the edge set of the subgraph induced by $V$ can be arbitrary (in the figure, it is empty).}
\label{fig:amplifier}
\end{figure}

Let $A_{n,\epsilon}$, for $0<\epsilon \leq 1$, be the following graph, illustrated in Fig.~\ref{fig:amplifier}.
The vertex set is the union of three disjoint sets of vertices $U$, $V$, and $W$, where $|U| = n$, $|V| = n^{2/3}$, $|W| = \alpha\cdot n^{2/3}$ and
\[
    \alpha
    =
    3\log_{1+\epsilon} n.
\]
The total number of vertices is $N = n + (\alpha +1)n^{2/3}$, which is $\Theta(n)$ when $\epsilon = \Omega(n^{-1/3}\log n)$.
Each vertex $u \in U$ has degree 1 and is adjacent to a vertex in $V$.
Each $v \in V$ is adjacent to $n^{1/3}$ vertices in $U$ and $n^{2/3}$ vertices in $W$. Vertex $v$ may also be connected to other vertices in $V$; we do not impose any constraints on such connections, i.e., the edge set of the induced subgraph $A_{n,\epsilon}[V]$ can be arbitrary (in Fig.~\ref{fig:amplifier}, it is empty).
Finally, each $w\in W$ is adjacent to $\alpha^{-1}n^{2/3}$ vertices in $V$ and $n^{2/3}$ vertices in $W$.
The edges between vertices in $W$ are such that the $n^{2/3}$-regular induced subgraph $A_{n,\epsilon}[W]$ has high conductance.
For concreteness we assume
\[
    \phi(A_{n,\epsilon}[W]) \geq 1/3.
\]
We note that this requirement holds almost surely if $A_{n,\epsilon}[W]$ is a \emph{random} $n^{2/3}$-regular graph~\cite{bollobas1988isoperimetric}.

\begin{theorem}
    \label{thm:amplifiers}
    The fixation probability of $A_{n,\epsilon}$, for $\epsilon > 0$, is
    $\rho_{A_{n,\epsilon}}(r) = 1-O\left(\frac{\log n}{\epsilon n^{1/3}}\right)$, for $r \geq 1+\epsilon$.
\end{theorem}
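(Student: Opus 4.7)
My plan is to track the growth of the mutant set through the three layers $U$, $V$, $W$ and to show that the initial mutant, which is overwhelmingly a leaf of $U$, seeds a mutant avalanche in the expander $A_{n,\epsilon}[W]$, after which global fixation follows easily. Since $|U|/N = n/(n+(\alpha+1)n^{2/3}) = 1 - O(\alpha n^{-1/3}) = 1 - O(\epsilon^{-1} n^{-1/3}\log n)$, the initial mutant lies in $U$ outside an event whose probability is already within the target bound, so I condition on the initial mutant being a leaf $u \in U$ with unique $V$-neighbor $v$. Because $\deg_u = 1 \ll \deg_v = \Theta(n^{2/3})$, a direct race-of-rates estimate gives that $u$ reproduces on $v$ before $v$ ever reproduces on $u$ with probability $1 - O(n^{-2/3})$. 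Moreover, $u$ dies only when $v$ is non-mutant and selects $u$ as reproduction target, an event of per-step probability $O(1/(\deg_v N)) = O(n^{-5/3})$, so $u$ remains mutant throughout a window $T_0 = c\,n^{5/3}/\log n$ with probability $1 - O(1/\log n)$.

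For the central seeding-of-$W$ step, throughout $T_0$ the long-lived leaf $u$ keeps re-infecting $v$, and as long as $W$ has no mutants the number $X_t \in \{0,1,\dots,n^{1/3}\}$ of mutant leaves adjacent to $v$ is a biased random walk whose down/up ratio is exactly
\[
    \frac{\Pr[X_{t+1} = X_t - 1 \mid X_t \geq 1]}{\Pr[X_{t+1} = X_t + 1 \mid X_t \geq 1]} = \frac{1}{r^2},
\]
giving upward bias $\Omega(\epsilon)$. Whenever $X_t \geq 1$, vertex $v$ is mutant for at least an $\Omega(r/n^{1/3})$ fraction of steps, so $v$ reproduces on $W$ at total rate $\Omega(r^2/(n^{1/3}\,N))$ per step. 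Each resulting fresh mutant $w \in W$ induces dynamics essentially equivalent to a single mutant in the Moran process on the regular expander $A_{n,\epsilon}[W]$: combining $\phi(A_{n,\epsilon}[W]) \geq 1/3$, $r \geq 1+\epsilon$, and the fact that a $1 - O(1/\alpha)$ fraction of $w$'s neighborhood lies inside $W$, a standard drift/isoperimetric argument yields that $w$ overtakes all of $W$ with probability $\Omega(\epsilon)$. The expected number of such fresh attempts in $T_0$ is $\Omega(n^{1/3}/\log n)$, and a conditional-independence argument across attempts (transient failing mutants in $W$ die quickly and leave the reservoir state essentially unchanged) gives
\[
    \Pr[\text{no attempt fixes } W]
    \leq \exp\!\bigl(-\Omega(\epsilon n^{1/3}/\log n)\bigr)
    \leq O\!\left(\frac{\log n}{\epsilon n^{1/3}}\right).
\]

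For the final phase, once $W$ is entirely mutant each $v \in V$ sees reproductions from $n^{2/3}$ mutant $W$-neighbors against only $n^{1/3}$ non-mutant $U$-neighbors, so $V$ becomes fully mutant within $\tilde O(n)$ steps w.h.p.; every remaining leaf in $U$ is then infected by its $V$-neighbor within $\tilde O(n)$ further steps w.h.p. Alternatively, one may invoke Lemma~\ref{lem:key}: once a constant fraction of the degree-$1$ vertices in $U$ is mutant, the harmonic volume of the mutant set is $\Omega(n)$ and Lemma~\ref{lem:key} delivers fixation immediately.

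The main obstacle is the seeding phase. A naive gambler's-ruin analysis of a \emph{single} attempt yields success probability only $\Omega(\epsilon)$, far short of the $1 - O(\log n/(\epsilon n^{1/3}))$ target. The saving grace is the extraordinary longevity of the leaf seed $u$ (lifetime $\Omega(n^{5/3})$), which buys $\Omega(n^{1/3}/\log n)$ essentially independent attempts at seeding $W$; boosting their aggregate success probability to the target bound requires showing that the attempts are sufficiently decoupled despite all sharing the oscillating state of $v$ and the short-lived mutants in $W$ that failed to take over. Making this decoupling rigorous — presumably via a renewal or coupling argument that resets the state after each failed attempt — will be the technical heart of the proof.
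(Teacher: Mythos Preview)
There is a genuine gap: your choice of survival window $T_0 = c\,n^{5/3}/\log n$ is fatal. Over that window the leaf $u$ dies with probability $\Theta(T_0/(N\deg_v)) = \Theta(1/\log n)$, exactly as you compute, and this error term alone already dwarfs the target $O\!\left(\frac{\log n}{\epsilon n^{1/3}}\right)$ (for instance, take $\epsilon$ constant). Since everything else in your argument is conditioned on $u$ surviving that window, the final bound you obtain is only $1 - O(1/\log n)$, not $1 - O(\log n/(\epsilon n^{1/3}))$. Making $T_0$ shorter so that the survival error matches the target leaves you with only $O(\epsilon^{-1}\log n)$ seeds into $W$, each succeeding with probability $\Theta(\epsilon)$; then the ``approximate independence'' you flag as the technical heart becomes essential rather than a polish, and you have not supplied it.

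The paper sidesteps exactly this tension by \emph{not} fixing a single long window. Instead it defines $\pi_0,\pi_1,\pi_2$ as the worst-case fixation probabilities from the state classes $\{S: S\cap U\neq\emptyset\}$, $\{S: S\cap U\neq\emptyset,\ S\cap W\neq\emptyset\}$, $\{S: |S\cap W|\geq |W|/2\}$, and sets up inequalities of the form $\pi_0 \geq (p-2\varepsilon)\pi_1 + (1-p)\pi_0$ and $\pi_1 \geq q\pi_2 + (1-q-\ell\varepsilon)\pi_0$, where each round has length only $O(N)$ or $O(N\log n/\epsilon)$. Solving this linear system makes the accumulated death probability of $u$ come out as the \emph{ratio} of per-round death rate to per-round success rate, namely $O(n^{-1/3})$ and $O(\epsilon^{-1}n^{-1/3})$, which is exactly the target. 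This renewal structure is what your acknowledged ``decoupling/reset'' step would have to reproduce, and without it the argument does not close. Two smaller issues: the claimed down/up ratio $1/r^2$ for $X_t$ ignores that transitions in $X_t$ are gated by the mutant status of $v$, which itself depends on its $\Theta(n^{2/3})$ neighbours in $W$; and a single seed $w$ naturally conquers only half of $W$ via the conductance argument, after which Corollary~\ref{cor:hpfixation} (applied to the harmonic volume of $W/2$, not of $U$) finishes the job.
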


Our proof of Theorem~\ref{thm:amplifiers} uses the following lemma, which is of independent interest.
The lemma establishes a lower bound on the probability $\rho_G^S$ that fixation is reached on a general graph $G$, as a function of the harmonic volume of the initial set $S$ of mutant vertices.
Recall that the harmonic volume of vertex set $S$ is $hvol(S) = \sum_{v\in S}(\deg_v)^{-1}$.

\begin{lemma}
    \label{lem:key}
    Let $G = (V,E)$ be a connected graph with
    minimum degree $\delta$,
    and let $S\subseteq V$ be an arbitrary set.
    The probability that fixation is reached on $G$ when the initial set of mutants is $S$ is $\rho_G^S(r) \geq \frac{1-r^{-\delta\cdot hvol(S)}}{1-r^{-\delta\cdot hvol(V)}}$.
\end{lemma}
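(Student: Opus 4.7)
The plan is to introduce a potential function that turns the configuration process into a supermartingale, and then apply the optional stopping theorem, in direct analogy with the gambler's-ruin derivation of the classical Moran formula on $K_n$. The natural choice, suggested by the form of the target bound, is
\[
    \Phi(S) \;=\; r^{-\delta \cdot hvol(S)}.
\]
This satisfies $\Phi(\emptyset) = 1$ and $\Phi(V) = r^{-\delta \cdot hvol(V)}$, and on a $\delta$-regular graph it reduces to $r^{-|S|}$, recovering the classical formula as a sanity check. Granted the supermartingale property, optional stopping at the absorption time $\tau$ (a.s.\ finite since $G$ is connected, and $\Phi$ bounded) yields
$\rho_G^S(r)\cdot r^{-\delta\cdot hvol(V)} + \bigl(1-\rho_G^S(r)\bigr) \leq r^{-\delta \cdot hvol(S)}$,
which rearranges immediately to the claimed bound (using $r > 1$ so that $1 - r^{-\delta\cdot hvol(V)} > 0$; the case $r = 1$ then follows by continuity).

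The main work lies in showing that $\{\Phi(M_t)\}$ is a supermartingale. Conditioning on $M_t = M \notin \{\emptyset, V\}$, the expected increment decomposes as a sum over the boundary edges of $M$, since reproduction along an internal edge leaves $M$ unchanged. Fix a boundary edge $\{u, v\}$ with $u \in M$, $v \notin M$, set $F = r|M| + (n - |M|)$, and write $a = \delta/\deg_u$ and $b = \delta/\deg_v$, both in $(0, 1]$ by definition of $\delta$. With probability $r/(F\deg_u)$, $u$ reproduces on $v$, multiplying $\Phi$ by $r^{-b}$; with probability $1/(F\deg_v)$, $v$ reproduces on $u$, multiplying $\Phi$ by $r^{a}$. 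Clearing positive factors, the per-edge contribution to the expected increment is non-positive if and only if
\[
    \frac{r^{a} - 1}{a} \;\leq\; r^{1-b}\cdot \frac{r^{b} - 1}{b}.
\]

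This pointwise inequality is the key step and the one I expect to be the main obstacle. My approach is to split it through the intermediate value $r - 1$, using the convexity of $x \mapsto r^x$ on $[0, 1]$, which yields the chord bound $r^{x} \leq 1 + x(r - 1)$ for every $x \in [0, 1]$. Applied at $x = a$, the chord bound gives $(r^a - 1)/a \leq r - 1$. Applied at $x = 1 - b$, it gives $r^{1-b} \leq 1 + (1 - b)(r - 1) = r - b(r - 1)$, which rearranges to $r^{1-b}(r^b - 1)/b \geq r - 1$. Chaining the two estimates proves the per-edge inequality, so summing over boundary edges (with absorbing states contributing zero) establishes the supermartingale property. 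Combined with the optional stopping step above, this completes the proof.
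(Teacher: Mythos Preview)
Your proof is correct, and in fact reduces to exactly the same analytic inequality as the paper's: in the paper's notation this is $f(x,y) = x + ry - xr^{y} - yr^{1-x} \geq 0$ for $x,y\in[0,1]$, which is your per-edge inequality after the substitution $x = b$, $y = a$. The difference is in the surrounding framework and in how the inequality is proved.

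The paper reformulates the process as a gambler's-ruin variant, tracking the (scaled) harmonic volume of the \emph{non}-mutants as the gambler's fortune; it then argues that one may restrict to deterministic Markovian betting strategies, writes the recursion \eqref{eq:pi} for the ruin probability, verifies that the target function $\pi_k$ satisfies the corresponding inequality \eqref{eq:pitilda}, and closes with a discrete maximum-principle comparison. The key inequality $f(x,y)\geq 0$ is proved by computing second partials and using concavity in each variable together with the boundary values $f(x,0)=f(0,y)=f(1,1)=0$.

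Your route is more direct: the potential $\Phi = r^{-\delta\cdot hvol(\cdot)}$ plus optional stopping replaces the gambler's-ruin scaffolding, the reduction to Markovian strategies, and the maximum-principle step all at once. Your proof of the inequality---splitting through $r-1$ via the single chord bound $r^{x}\leq 1 + x(r-1)$ applied at $x=a$ and at $x=1-b$---is shorter than the paper's second-derivative argument. What the paper's presentation buys is an explicit, self-contained statement of the gambling game (inequality \eqref{eq:ruin}) that may be of independent interest; what yours buys is brevity.
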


Observe that for regular graphs, the lower bound predicted by Lemma~\ref{lem:key} matches the actual probability of fixation, $\frac{1-r^{-|S|}}{1-r^{-|V|}}$.
The proof of Lemma~\ref{lem:key}, in Sec.~\ref{sec:lem:key}, relies on the analysis of a variation of the gambler's ruin problem, in which the player can have an arbitrary strategy, subject to some constraints on the maximum bet amount, and the maximum profit per gamble.

In our proof of Theorem~\ref{thm:amplifiers}, we just use the following corollary of Lemma~\ref{lem:key}, which implies that if at some point the current set of mutant vertices has harmonic volume $\Omega(\delta^{-1}\log_r n)$, then fixation is subsequently reached with high probability.

\begin{corollary}
    \label{cor:hpfixation}
    Let $G = (V,E)$ be a connected graph with $|V|= n$ vertices and minimum degree $\delta$, and let $S\subseteq V$.
    If $hvol(S)\geq c\cdot \delta^{-1} \log_r n$, then $\rho_G^S(r) \geq 1-n^{-c}$.
\end{corollary}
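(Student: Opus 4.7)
The plan is to apply Lemma~\ref{lem:key} directly and simplify using the hypothesis on $hvol(S)$. The whole proof is essentially a one-line computation, so the ``obstacle'' here is cosmetic rather than substantive; the real work has already been done in Lemma~\ref{lem:key}.

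First, I would observe that the hypothesis $hvol(S) \geq c\cdot \delta^{-1}\log_r n$ rearranges to $\delta\cdot hvol(S) \geq c\log_r n$, which, upon exponentiating base $r$, yields
\[
    r^{-\delta\cdot hvol(S)} \;\leq\; r^{-c\log_r n} \;=\; n^{-c}.
\]
Substituting this bound into the numerator of the estimate provided by Lemma~\ref{lem:key} gives $1 - r^{-\delta\cdot hvol(S)} \geq 1 - n^{-c}$.

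Next, I would handle the denominator by the trivial observation that $1 - r^{-\delta\cdot hvol(V)} \leq 1$, since $r > 1$ (which is implicit in the use of $\log_r n$ as a positive quantity; the interesting regime of the statement is $r>1$, and otherwise the bound is vacuous for the relevant values of $c$). Dividing a nonnegative quantity by something at most $1$ can only increase it, so
\[
    \rho_G^S(r) \;\geq\; \frac{1 - r^{-\delta\cdot hvol(S)}}{1 - r^{-\delta\cdot hvol(V)}} \;\geq\; 1 - n^{-c},
\]
which is exactly the claim. The entire argument is a direct substitution into Lemma~\ref{lem:key}, with no further ingredients needed; the only subtlety worth flagging is ensuring the base-$r$ logarithm is interpreted correctly (which is harmless under the intended reading $r>1$).
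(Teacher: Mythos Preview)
Your proof is correct and is exactly the intended derivation: the paper states the corollary as an immediate consequence of Lemma~\ref{lem:key} without spelling out the details, and your one-line substitution is precisely how it follows.
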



With Corollary~\ref{cor:hpfixation} in hand, it suffices to show that with high probability the mutation in $A_{n,\epsilon}$ spreads to a set of vertices of sufficiently large harmonic volume, in order to prove that fixation is reached with high probability.
For that we use the next two lemmas.
The first establishes a lower bound of roughly $\frac{r-1}{2r}$
(for sufficiently large $\alpha$) on the probability that the mutation spreads from a single vertex in $W$ to half of the vertices in $W$ within $O\big(\frac{rN\log n}{r-1}\big)$ steps.
We will see that the harmonic volume of half of $W$ is large enough that allows us to apply Corollary~\ref{cor:hpfixation}.
In the following we denote by $M_t^S$ the set of mutants in $A_{n,\epsilon}$ after the first $t\geq 0$ steps, when the initial set of mutants is $S$.

\begin{lemma}
    \label{lem:expansion}
    Let $h_S = \min\{t\colon |M_t^S\cap W|\geq |W|/2\}$ be the number of steps before half of the vertices $w\in W$ are mutants, when the initial set of mutants is $S$.%
    \footnote{$h_S = \infty$ if the mutation is extinct before there is a point at which half of the vertices $w\in W$ are mutants.}
    For any set $S$ with $S\cap W\neq \emptyset$,
    \[
        \Pr\left[h_S \leq \frac{32Nr\log(\alpha n)}{r-1-4\alpha^{-1}} \right]
        \geq
        \frac{r-1-4\alpha^{-1}}{2r}.
    \]
\end{lemma}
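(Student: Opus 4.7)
The plan is to track $X_t := |M_t^S \cap W|$, which starts at $X_0 \geq 1$ since $S\cap W\neq\emptyset$, and argue that $X_t$ performs a sufficiently biased random walk to reach $|W|/2$ within the claimed time. The two key ingredients are the assumed conductance $\phi(A_{n,\epsilon}[W])\geq 1/3$ and the fact that each $w\in W$ has total degree $(1+\alpha^{-1})n^{2/3}$, of which only $\alpha^{-1}n^{2/3}$ edges go to $V$.

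First I bound the bias of the embedded walk on $X_t$. Fix $X_t = k$ with $1\leq k < |W|/2$ and split the one-step transition probabilities $p_+(k), p_-(k)$ (of $X$ changing by $+1$ or $-1$) into a $W$-internal contribution and a $V$-to-$W$ contribution; edges incident to $U$ cannot change $X_t$. The $W$-internal parts $p_{+,W}$ and $p_{-,W}$ are both proportional to the $W$-internal mutant/non-mutant cut inside $A_{n,\epsilon}[W]$, so $p_{+,W}=r\,p_{-,W}$; the conductance bound forces this cut to have size at least $(k/3)\,n^{2/3}$, giving $p_{-,W}\geq k/(3F_t(1+\alpha^{-1}))$, where $F_t\leq rN$ is the total fitness. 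Since each mutant $w\in W$ has $\alpha^{-1}n^{2/3}$ neighbors in $V$ and each $v\in V$ has $\deg_v\geq n^{2/3}$, the downward contribution from $V$ satisfies $p_{-,V}\leq k\alpha^{-1}/F_t$, which is at most $3\alpha^{-1}(1+\alpha^{-1})\leq 4\alpha^{-1}$ times $p_{-,W}$ for $\alpha$ sufficiently large. Thus $p_+/p_-\geq r/(1+4\alpha^{-1})=:\beta$ at every state $k<|W|/2$.

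Gambler's ruin (by comparison with a homogeneous $\beta$-biased walk on $\{0,\dots,|W|/2\}$) then gives that the embedded walk starting from $X_0\geq 1$ hits $|W|/2$ before $0$ with probability at least $1-\beta^{-1}=(r-1-4\alpha^{-1})/r$. For the time bound, each real step moves $X_t$ with probability $p_++p_-\geq p_+\geq k/(3N(1+\alpha^{-1}))$, so the expected real time per embedded visit to state $k$ is $O(N/k)$. I then apply Doob's $h$-transform: conditional on reaching $|W|/2$, the embedded walk is still biased with ratio $\geq\beta$ and cannot visit $0$, so each state $k$ is visited $O(1/(\beta-1))$ times in expectation; summing $\sum_{k=1}^{|W|/2} O(N/(k(\beta-1)))$ gives conditional expected total real time $O(N\log(\alpha n)/(\beta-1)) = O(N\log(\alpha n)/(r-1-4\alpha^{-1}))$.

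With $T$ as in the lemma statement, $T$ is at least twice this conditional expectation (the extra $r$ factor in $T$ is slack), so Markov's inequality yields $\Pr[h_S > T \mid \text{reach}]\leq 1/2$, and multiplying by the reach probability $(r-1-4\alpha^{-1})/r$ gives the claimed $(r-1-4\alpha^{-1})/(2r)$ bound. I expect the most delicate step to be the $h$-transform bookkeeping for the conditional visit counts; a direct drift-plus-concentration argument on $X_t$ would be a possible alternative, but appears notably messier.
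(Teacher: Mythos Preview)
Your outline is essentially the paper's approach: both compute the bias $p_+/p_- \geq r/(1+4\alpha^{-1})$ from the conductance bound on $A_{n,\epsilon}[W]$ and the fact that only an $\alpha^{-1}$-fraction of each $w$'s edges go to $V$, then use gambler's ruin for the success probability $(r-1-4\alpha^{-1})/r$, bound an expected hitting time, and finish with Markov's inequality to lose the factor~$2$.

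The one technical difference worth flagging is how the time bound is obtained. The paper does \emph{not} condition on success. Instead it builds an explicit coupling with a birth--death chain $\CC$ on $\{0,\dots,|W|/2\}$ having $b_k=k/(8N)$ and $d_k=b_k(1+4\alpha^{-1})/r$, arranged so that $C_t\le\max_{j\le t}|W_j|$ for all $t$; it then computes the \emph{unconditional} expected absorption time of $\CC$ in closed form ($\le 16N\log(\alpha n)$) and combines Markov's inequality with the success probability via a union bound. Your $h$-transform route reaches the same conclusion but the step ``conditional on reaching $|W|/2$, the embedded walk is still biased with ratio $\ge\beta$'' needs justification: $X_t=|M_t\cap W|$ is not itself Markov (the one-step probabilities $p_\pm$ depend on the full configuration $M_t$), so Doob's $h$-transform does not apply directly to the embedded walk on $X_t$. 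The paper's coupling avoids this by reducing to a genuine birth--death chain before doing any hitting-time analysis; an equally clean fix on your side is to drop the conditioning and bound the unconditional expected time to hit $\{0,|W|/2\}$, which is what the paper does.
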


The intuition for this result is the following.
Since the induced subgraph $A_{n,\epsilon}$[W] is a regular expander, if we ignore all edges between vertices in $V$ and $W$, then from a single initial mutant in $W$ the mutation spreads to all of $W$ with probability $\frac{r-1}{r}$, as in any regular graph~\cite{lieberman2005evolutionary}; and if $r>1$ is not a function of $n$, this takes $O(n\log n)$ steps with high probability, because of the high expansion~\cite{diaz2016absorption}.
So, we need to account for the edges between vertices in $V$ and $W$, and for the fact that $r$ can be very close to 1.
It turns out that the effect of the edges between vertices in $V$ and $W$ is not significant as long as at most a constant fraction of the vertices in $W$ are mutants, because only a small, $\alpha^{-1}$, fraction of the neighbors of each $w\in W$ belong to $V$, and the degree of those neighbors is roughly the same as $w$'s.
The effect of small fitness values is that $h_S$ grows by roughly a factor of $(r-1)^{-1}$.
The proof of Lemma~\ref{lem:expansion}, in Sec.~\ref{sec:lem:expansion}, uses a coupling argument that relates the Moran process to a simpler birth-and-death Markov chain $\CC$ on $\{0,1,\ldots,|W|/2\}$, in which the birth probability at each step is larger than the death probability by a factor of roughly $r$.
The coupling is such that $\CC$ hits its maximum value of $|W|/2$ (before hitting 0) only after a point at which half of the vertices $w\in W$ are mutants in the Moran process.
This allows us to analyze $\CC$ instead.

The next lemma provides a lower bound of order $rn^{-1/3}$
on the probability that the mutation spreads from a vertex in $U$ to at least one vertex in $W$ within $O(N)$ steps.

\begin{lemma}
    \label{lem:spreadtoW}
    Let $o_S = \min\{t\colon |M_t^S\cap W|\neq\emptyset\}$ be the number of steps before at least one vertex $w\in W$ is a mutant, when the initial set of mutants is $S$.
    For any set $S$ with $S\cap U \neq \emptyset$,
    \[
        \Pr\left[o_S \leq 2N\right]
        \geq
        \frac{\big(1-O(n^{-1/3})\big)\,r}{18n^{1/3} + 6r}.
    \]
\end{lemma}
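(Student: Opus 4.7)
The plan is to lower-bound $\Pr[o_S \leq 2N]$ by exhibiting a specific two-stage spreading trajectory. Fix $u \in S \cap U$ and let $v \in V$ be its unique neighbor; by the construction of $A_{n,\epsilon}$, $\deg_u = 1$ and $\deg_v \in [n^{1/3}+n^{2/3},\, n^{1/3}+2n^{2/3}]$. Define $E_1$ to be the event that $u$ reproduces on $v$ at some step $t_1 \leq N$ before $v$ reproduces on $u$, and, conditional on $E_1$, $E_2$ to be the event that $v$ reproduces on some $w \in W$ at some step $t_2 \in (t_1, t_1+N]$ before $v$ dies. Then $E_1 \cap E_2 \subseteq \{o_S \leq 2N\}$, so it suffices to lower-bound $\Pr[E_1 \cap E_2]$.

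For $E_1$: while $u$ is mutant and $v$ is not, the only events affecting the status of $\{u,v\}$ are $u\to v$ and $v\to u$, with per-step probabilities $p_t = (r/\deg_u)/D_t$ and $q_t = (1/\deg_v)/D_t$, where $D_t = r|M_{t-1}^S|+N-|M_{t-1}^S|$. The other initial mutants in $S$ may evolve and change $|M|$, but this rescales $p_t$ and $q_t$ by the same factor, so the ratio $p_t/(p_t+q_t) = r\deg_v/(r\deg_v+1) = 1-O(n^{-2/3})$ is constant across steps. Since $r\geq 1$ and $|M|\leq N$, the total fitness satisfies $D_t\leq rN$, giving the uniform lower bound $p_t+q_t \geq 1/N$. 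Consequently the first step $R$ at which one of the two events fires satisfies $\Pr[R\leq N]\geq 1-(1-1/N)^N \geq 1-e^{-1}$, and $\Pr[E_1] \geq (1-e^{-1})(1-O(n^{-2/3})) = \Omega(1)$.

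For $E_2$: conditional on $E_1$, both $u$ and $v$ are mutant at step $t_1$. The relevant per-step probabilities are $a_t/D_t$ for $v\to W$ and $c_t/D_t$ for $v$ dying, with $a_t = rn^{2/3}/\deg_v \geq (r/2)(1-O(n^{-1/3}))$ and $c_t = \sum_{v'\in\Gamma_v\setminus M_{t-1}^S} 1/\deg_{v'}$. Using the explicit degree structure of $A_{n,\epsilon}$ one bounds $c_t \leq n^{1/3}+2$ by splitting across neighbor classes: the $n^{1/3}$ neighbors of $v$ in $U$ have degree $1$ each (contributing $\leq n^{1/3}$); the $n^{2/3}$ neighbors in $W$ have degree $\Theta(n^{2/3})$ each (contributing $\leq 1$ in total); and the at most $n^{2/3}-1$ neighbors in $V$ have degree $\geq n^{1/3}+n^{2/3}$ each (contributing $\leq 1$). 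As in the previous stage the ratio $a_t/(a_t+c_t)$ is constant in $t$ and bounded below by $(1-O(n^{-1/3}))\cdot r/(r+3n^{1/3}+6)$, while $a_t+c_t\geq a_t$ gives per-step rate of a relevant event at least $(1/2)(1-O(n^{-1/3}))/N$. So within $N$ more steps some relevant event fires with probability $\geq 1-e^{-1/2}-o(1)$, yielding $\Pr[E_2\mid E_1] = \Omega\!\left((1-O(n^{-1/3}))\,r/(r+n^{1/3})\right)$.

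Multiplying the two bounds gives $\Pr[o_S\leq 2N] = \Omega\!\left((1-O(n^{-1/3}))\,r/(18n^{1/3}+6r)\right)$, matching the stated claim up to absolute constants (indeed $(1-e^{-1})(1-e^{-1/2})$ is close to $1/6$). The main obstacle is that the per-step probabilities of our focus events are not time-homogeneous, since $|M_{t-1}^S|$ evolves unpredictably due to other mutants elsewhere in the graph. This is resolved by two observations used above: (i) the ratio of success-to-failure rates at each stage is invariant under any rescaling of $|M|$ (since both numerator and denominator share the same $D_t$), so this conditional probability is a constant independent of the step; and (ii) the crude uniform bound $D_t\leq rN$ suffices to lower-bound per-step event rates, which is all that is needed to control the time to the first relevant event in each stage.
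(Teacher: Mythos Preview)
Your proof is correct and follows essentially the same two-stage approach as the paper: first show that $u$ infects $v$ within $N$ steps with constant probability, then that $v$ infects some $w\in W$ within a further $N$ steps with probability $\Omega(r/(n^{1/3}+r))$, using in each stage the observation that the success/failure ratio is unaffected by the fluctuating normalizer $D_t$. One small imprecision: in stage~2 you write that the ratio $a_t/(a_t+c_t)$ is ``constant in $t$,'' but $c_t=\sum_{v'\in\Gamma_v\setminus M_{t-1}}1/\deg_{v'}$ genuinely varies with the mutant set; what you actually use (and what suffices) is the uniform lower bound obtained from $c_t\le n^{1/3}+2$, which is exactly how the paper argues.
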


The proof of Lemma~\ref{lem:spreadtoW} is straightforward, and can be found in Sec.~\ref{sec:lem:spreadtoW}.
It uses similar type of arguments as the proof of Theorem~\ref{thm:suppressors}, to show that: (i)~with constant probability, the mutation spreads from a vertex $u\in U$ to its neighbor $v\in V$ within $N$ steps; and (ii) with probability
$\Omega\big(\frac{r}{n^{1/3} + r}\big)$, the mutation spreads from $v$ to some neighbor in $W$ within $N$ additional steps.

Before we prove the above lemmas, we show how we can use them to derive Theorem~\ref{thm:amplifiers}.
The detailed derivation can be found in Sec.~\ref{sec:proofAmplifier}.
Here, we just give some informal intuition.
For simplicity,%
\footnote{The intuitive arguments sketched here do not readily extend to the case where $\epsilon\to 0$ as $n$ grows; see Sec.~\ref{sec:proofAmplifier} for the full proof.}
we assume $\epsilon > 0$ is a constant (independent of $n$), and thus $\alpha=\Theta(\log n)$, and we argue $\rho_{A_{n,\epsilon}}(r) = 1- O(n^{-1/3}\log n)$.
We assume that $r=1+\epsilon$, as the result for $r>1+\epsilon$ follows from the case of $r=1+\epsilon$ by monotonicity.

The initial mutant $s$ belongs to set $U$ with probability $|U|/N = 1 - O(n^{-1/3}\log n)$.
Suppose that $s=u\in U$, and let $v$ be $u$'s neighbor in $V$.
Since $\deg_v > n^{2/3}$, it follows it will take $t = \Omega(N n^{1/3}\log n) = \Omega(n^{4/3}\log n)$ steps before $v$ reproduces on $u$, with probability $1-O(t/(N\deg_v)) = 1 - O(n^{-1/3}\log n)$.
Until then, $u$ will be a mutant and Lemma~\ref{lem:spreadtoW} implies that from any state in which $u$ is a mutant, it will take at most $2N$ steps before the mutation spreads to at least one vertex $w\in W$ with probability $\Omega(n^{-1/3})$.
If follows that within $2N/\Omega(n^{-1/3}) = O(Nn^{1/3})$ steps, at least one $w\in W$ will become a mutant with \emph{constant} probability.
Once this happens, Lemma~\ref{lem:expansion} implies that in another $O(N\log n)$ steps the mutation will spread to half of $W$ with constant probability.
Therefore, within $O(N n^{-1/3} + N\log n) = O(n^{4/3})$ steps in total, the mutation will spread from $u$ to half of $W$ with constant probability.
We can turn this constant into high probability by repeating the argument up to $\Theta(\log n)$ times, which would take $O(n^{4/3}\log n)$ steps.
Finally, once half of the vertices in $W$ are mutants, Corollary~\ref{cor:hpfixation} implies that fixation is subsequently reached with high probability, as the harmonic volume of half of $W$ is
$\frac{|W|/2}{\deg_w} > \log_{1+\epsilon} n$.
Combining all the above yields that fixation is reached with probability $1 - O(n^{-1/3}\log n)$.

\subsection{Proof of Theorem~\ref{thm:amplifiers} (Assuming Lemmas~\ref{lem:key},~\ref{lem:expansion}, and~\ref{lem:spreadtoW})}
\label{sec:proofAmplifier}

We will assume w.l.o.g.\ that $\epsilon = \omega(n^{-1/3}\log n)$, because if $\epsilon = O(n^{-1/3}\log n)$ the lower bound for the fixation probability predicted by the theorem becomes zero.
This assumption on $\epsilon$ implies that $\alpha = o(n^{1/3})$;
recall that
$
    \alpha = 3\log_{1+\epsilon} n = \Theta(\epsilon^{-1} \ln n).
$
We also observe that it suffice to assume that
\[
    r = 1+\epsilon,
\]
since the fixation probability is a monotone function of $r$~\cite{diaz2016absorption}, and thus $\rho_{A_{n,\epsilon}}(r) \geq \rho_{A_{n,\epsilon}}(1+\epsilon)$ for $r > 1+\epsilon$.

Next we fix some notation.
We consider the following collections of sets $S\subseteq U\cup V\cup W$:
\begin{gather*}
    \UU = \{S\colon S\cap U\neq\emptyset\},
    \qquad
    \OO = \{S\colon S\cap W\neq\emptyset\},
    \qquad
    \HH = \{S\colon |S\cap W|\geq|W|/2\},
    \\
    \SS_0 = \UU,
    \qquad
    \SS_1 = \UU \cap \OO,
    \qquad
    \SS_2 = \HH.
\end{gather*}
We define
\[
    \pi_i = \min_{S\in\SS_i} \rho_{A_{n,\epsilon}}^{S}(r),
    \qquad\text{for } i\in\{0,1,2\},
\]
i.e., $\pi_i$ is the probability that fixation is reached for the worst-case initial mutant set $S\in \SS_i$.
By $M_t$ we will denote the set of mutants after the first $t\geq0$ steps.
We are now ready to start the proof.

We first lower bound the fixation probability $\rho_{A_{n,\epsilon}}(r)$ in terms of $\pi_0$.
We have
\begin{equation}
    \label{eq:pifix}
    \rho_{A_{n,\epsilon}}(r)
    \geq
    \Pr[s\in U]\cdot\pi_0
    =
    \frac{|U|}{N}\cdot \pi_0
    =
    \frac{n\pi_0}{n+(\alpha+1)n^{2/3}}
    =
    \big(1-O(\alpha n^{-1/3})\big)\pi_0.
\end{equation}

Next we lower bound $\pi_0$ in terms of $\pi_1$.
Suppose that $M_t=S\in \SS_0$, $u\in S\cap U$, and $v\in V$ is the neighbor of $u$.
Lemma~\ref{lem:spreadtoW} implies that $M_{t+k}\in \OO$ for some (at least one) $k\in\{0,\ldots,2N\}$ with probability at least
\[
    p= \frac{\big(1-O(n^{-1/3})\big)\,r}{18n^{1/3} + 6r} = \Theta(n^{-1/3}),
\]
as $r = 1+\epsilon = \Theta(1)$.
Also the probability that $M_{t+k}\in \UU$ for \emph{all} $k\in\{0,\ldots,2N\}$ is greater or equal to the probability that $u$ is not replaced by a non-mutant in any of these $2N$ steps, which is at least $1-2N\cdot\frac{1/\deg_v}{N} \geq 1-2\varepsilon$, where
\[
    \varepsilon = \max_{v'\in V} \{1/\deg_{v'}\} \leq n^{-2/3}.
\]
So, if $M_t=S\in \SS_0$, then:
(i)~the probability that $M_{t+k} \in \OO \cap \UU = \SS_1$ for some $k\in\{0,\ldots,2N\}$ is at least $p-2\varepsilon$ (by a union bound), and
(ii)~the probability that $M_{t+2N}\in \SS_0$ is at least $1-2\varepsilon$.
It follows
\begin{align*}
    \pi_0
    &\geq
    (p-2\varepsilon) \cdot \pi_1
    +
    [(1-2\varepsilon) - (p-2\varepsilon)] \cdot \pi_0.
\end{align*}
Solving for $\pi_0$ yields
\begin{equation}
    \label{eq:pi0}
    \pi_0
    \geq
    \left(1-2\varepsilon/p\right)\pi_1
    =
    \big(1-O(n^{-1/3})\big)\pi_1
    .
\end{equation}

In a similar manner, we lower bound $\pi_1$ in terms of $\pi_0$ and $\pi_2$.
Suppose that $M_t=S\in \SS_1$, $u\in S\cap U$, and $v\in V$ is the neighbor of $u$.
Lemma~\ref{lem:expansion} implies that $M_{t+k}\in \HH$ for some $k\in\{0,\ldots,\ell N\}$, where
\[
    \ell
    =
    \frac{32(1+\epsilon)\log(\alpha n)}{\epsilon-4\alpha^{-1}}
    =
    \Theta(\epsilon^{-1}\log n)
    =
    O(n^{1/3}),
\]
with probability at least
\[
    q
    = \frac{\epsilon-4\alpha^{-1}}{2(1+\epsilon)}
    =
    \Theta(\epsilon)
    ,
\]
as $\alpha =\Theta(\epsilon^{-1}\log n)$ and $\epsilon = \Omega(n^{-1/3}\log n)$.
The probability that $M_{t+k}\in\UU$ for \emph{all} $k\in\{0,\ldots,\ell N\}$ is at least equal to $1-\ell N\cdot\frac{1/\deg_v}{N} \geq 1-\ell\varepsilon$.
So, if $M_t=S\in \SS_1$, then:
(i)~the probability that $M_{t+k} \in \SS_2 $ for some $k\in\{0,\ldots,\ell N\}$ is at least $q$, and
(ii)~the probability that $M_{t+\ell N}\in \SS_0$ is at least $1-\ell\varepsilon$.
This implies
\begin{equation}
    \label{eq:pi1}
    \pi_1
    \geq
    q\cdot \pi_2 +
    (1-\ell\varepsilon - q)\cdot\pi_0
    =
    q\cdot \pi_2 +
    \big(1- q - O(n^{-1/3})\big)\cdot\pi_0
    .
\end{equation}

Finally, we lower bound $\pi_2$ using Corollary~\ref{cor:hpfixation}.
If $M_t=S\in \SS_2$  and $w\in W$ then
\[
    hvol(S)
    \geq
    \frac{|W|/2}{\deg_w}
    =
    \frac{\alpha/2}{1 + \alpha^{-1}}
    >
    \frac{\alpha}{3}
    =
    \log_{1+\epsilon} n
    >
    \frac12\log_{1+\epsilon} N,
\]
as $\alpha = 3\log_{1+\epsilon}n > 3$ and $n>N^{1/2}$.
Applying Corollary~\ref{cor:hpfixation} (for $r=1+\epsilon$, $\delta = 1$, and $c=1/2$) yields
\begin{equation}
    \label{eq:pi2}
    \pi_2
    \geq
    1-N^{-1/2}.
\end{equation}

We now combine~\eqref{eq:pifix}--\eqref{eq:pi2}.
From~\eqref{eq:pi0} and~\eqref{eq:pi1},
\[
    \pi_1
    \geq
    \frac{q\pi_2}{1-(1-q - O(n^{-1/3}))(1-O(n^{-1/3}))}
    =
    \big(1-O(q^{-1}n^{-1/3})\big)\pi_2
    .
\]
From this,~\eqref{eq:pifix},~\eqref{eq:pi0}, and~\eqref{eq:pi2},
\begin{align*}
    \rho_{A_{n,\epsilon}}(r)
    &=
    \big(1-O(\alpha n^{-1/3})\big)
    \cdot
    \big(1-O(n^{-1/3})\big)
    \cdot
    \big(1-O(q^{-1}n^{-1/3})\big)
    \cdot
    \big(1-N^{-1/2}\big)
    \\&
    =
    1-O\big((\alpha+q^{-1})n^{-1/3}\big)
    =
    1-O\left(\epsilon^{-1}\log n \cdot n^{-1/3}\right).
\end{align*}
This completes the proof of Theorem~\ref{thm:amplifiers}.
\qed

\subsection{Proof of Lemma~\ref{lem:spreadtoW}: Spread of Mutation to a Vertex in \texorpdfstring{$W$}{W}}
\label{sec:lem:spreadtoW}

We show that: (i)~with constant probability, the mutation spreads from a vertex $u\in U$ to its neighbor $v\in V$ within $N$ steps; and (ii) with probability
$\Omega\big(\frac{r}{n^{1/3} + r}\big)$, the mutation spreads from $v$ to some neighbor in $W$ within $N$ additional steps.

Suppose $u\in S\cap U$, and let $v$ be the neighbor of $u$ in $V$.
Let $t_v = \min\{t\colon v\in M_t^{S}\}$ be the first time when $v$ is a mutant.
We show
\begin{equation}
    \label{eq:tv}
    \Pr[t_v \leq N]
    \geq
    1-e^{-1} -n^{-2/3}.
\end{equation}
Let $t_{u\to v}$ be the first step in which $u$ reproduces on $v$, and define $t_{v\to u}$ similarly.
Let $t_1 = \min\{t_{u\to v}, t_{v\to u}\}$.
In any step $i\leq t_1$, the probability $u$ is chosen for reproduction is at least $\frac{r}{r N}=1/N$, since $u$ is a mutant.
It follows
\[
    \Pr[t_1\leq N] \geq 1 - \left(1-1/N\right)^N \geq 1-e^{-1}.
\]
Also the probability that $u$ reproduces on $v$ before $v$ reproduces on $u$ is
\[
    \Pr[t_{u\to v} = t_1]
    \geq
    \frac{r}{r + r/\deg_v}
    \geq
    1 - (\deg_v)^{-1}
    \geq
    1- n^{-2/3},
\]
where the first inequality holds because the probability $u$ reproduces (on $v$) in a step is proportional to $r$ while the probability $u$ reproduces on $v$ is at most proportional to $r/\deg_v$.
From the last two results above and a union bound, we get
\[
    \Pr[t_{u\to v} = t_1 \leq N]
    \geq
    1-e^{-1} -n^{-2/3}.
\]
Observing that the event $t_{u\to v} = t_1 \leq N$ on the left implies that $v$ is a mutant right after step $t_{u\to v} \leq N$, and thus $t_v\leq N$,
we obtain~\eqref{eq:tv}.

Next we show that
\begin{equation}
    \label{eq:os}
    \Pr[o_S \leq k + N \mid t_v=k]
    \geq
    \frac{(1-e^{-1/3})\, r}{3n^{1/3}+ r + 6}.
\end{equation}
Let $t_{v\to W}$ be the first step after step $t_v = k$ in which $u$ reproduces on some neighbor in $W$, and let $t'_{v}$ be the first step after $k$ in which a non-mutant neighbor of $v$ reproduces on $v$ (so, $v$ is a mutant until step  $t'_v$).
Let $t_2 = \min\{t_{v\to W}, t'_{v}\}$.
In any step $k<i\leq t_2$, the probability that $v$ reproduces on some neighbor in $W$ is at least
$
\frac{r n^{2/3}/\deg_v}{r N}
\geq
\frac{1}{3N}
.$
It follows
\[
    \Pr[t_2 \leq k + N \mid t_v = k] \geq 1 - \big(1 - 1/(3N)\big)^N\geq 1-e^{-1/3}.
\]
Further, given $t_2 \leq k +N$, the probability that in step $t_2$ vertex $v$ reproduces on some neighbor in $W$ is proportional to $r n^{2/3}/\deg_v \geq r/3$ while the probability that some non-mutant neighbor reproduces on $v$ is at most proportional to $n^{1/3} + \sum_{v'\in V}(1/\deg_{v'}) + \sum_{w\in W\cap \Gamma_v}(1/\deg_{w}) \leq n^{1/3} + 2$.
Hence,
\[
    \Pr[t_{v\to W} = t_2 \mid t_2 \leq k +N, t_v = k]
    \geq
    \frac{r/3}{r/3 + n^{1/3} + 2}
    \geq
    \frac{r}{r + 3n^{1/3} + 6}.
\]
Combining the last two results above yields
\[
    \Pr[t_{v\to W} = t_2 \leq k + N \mid t_v =  k]
    \geq
    \frac{(1-e^{-1/3})\, r}{r + 3n^{1/3} + 6}.
\]
Observing that the event $t_{v\to W} = t_2 \leq k + N$ implies $o_S \leq k + N$ completes the proof of~\eqref{eq:os}.

Finally, combining~\eqref{eq:tv} and~\eqref{eq:os} we obtain
\begin{align*}
    \Pr[o_S\leq 2N]
    &\geq
    \sum_{k\leq N}
    \big(
    \Pr[o_S \leq k + N \mid t_v=k]
    \cdot
    \Pr[t_v = k]
    \big)
    \geq
    \frac{(1-e^{-1/3})\, r}{3n^{1/3} + r + 6}
    \cdot
    \sum_{k\leq N}
    \Pr[t_v = k]
    \\&
    \geq
    \frac{(1-e^{-1/3})\, r}{3n^{1/3} + r + 6}
    \cdot
    \big(1-e^{-1} -n^{-2/3}\big)
    \geq
    \frac{\big(1-O(n^{-1/3})\big)\,r}{18n^{1/3} + 6r}.
\end{align*}
This completes the proof of Lemma~\ref{lem:spreadtoW}.\qed

\subsection{Proof of Lemma~\ref{lem:expansion}: Spread of Mutation to Half of \texorpdfstring{$W$}{W}}
\label{sec:lem:expansion}

We devise a coupling of the Moran process on $A_{n,\epsilon}$ and an appropriate birth-and-death Markov chain $\CC$ on $\{0,1,\ldots,|W|/2\}$.
This coupling ensures that $\CC$ hits its maximum value of $|W|/2$ (before hitting 0) only after half of the vertices $w\in W$ are mutants in the Moran process.
This allows us to analyze $\CC$ instead.

Let $M_t$ denote the set of mutants in $A_{n,\epsilon}$ after the first $t\geq0$ steps of the Moran process, and let $W_t = M_{t}\cap W$ be the set of mutants in $W$.
We assume that $M_0 = S$ and $W_0 = S\cap W \neq \emptyset$.
We also assume that $|W_0|<|W|/2$, otherwise $h_S=0$ and the lemma holds trivially.
We will use the shorthand notation
\[
    \kappa = |W|/2, \qquad k_0 = |W_0| \in\{1,\ldots,\kappa-1\}.
\]
Recall that for any vertex $w\in W$, $\deg_w = (1+\alpha^{-1})n^{2/3}$, and for any vertex $v\in V$, $\deg_v \geq n^{2/3}+n^{1/3}$.
Let $e_t$ be the number of edges between $W_t$ and $W\setminus W_t$.
Since $A_{n,\epsilon}[W]$ is a $n^{2/3}$-regular graph and its conductance is at least $1/3$,
\[
    e_t \geq
    n^{2/3}|W_t|/3,
    \text{\qquad if $|W_t| \leq \kappa$}.
\]
Let $e'_t$ be the number of edges between $W_t$ and $V$.
We have $e'_t = \alpha^{-1}n^{2/3}|W_t|$, and thus
\[
    e'_t
    \leq
    3\alpha^{-1}e_t,
    \text{\qquad if $|W_t| \leq \kappa$}.
\]

Let $p_T$ denote the probability that the number of mutants in $W$ increases in the next step of the Moran process, when the current set of mutants is $T$, and let $q_T$ be the probability that the number of mutants in $W$ decreases instead.
For any set $T$, if $|T| = \lambda$ and $|T\cap W| = k\in\{1,\ldots,\kappa-1\}$,
\begin{align}
    \label{eq:pk-moran}
    p_T
    =
    \Pr\big[
    |W_{t+1}| = k+1
    \,\big|\,
    M_{t} = T
    \big]
    &\geq
    \frac{r e_t/\deg_w}{N-\lambda + \lambda r}
    \geq
    \frac{k}{4N},
    \\
    \label{eq:qk-moran}
    q_T
    =
    \Pr\big[
    |W_{t+1}| = k-1
    \, \big|\,
    M_{t} = T
    \big]
    &\leq
    \frac{e_t/\deg_w}{N-\lambda + \lambda r}
    + \frac{e'_t/\min_{v\in V}\deg_v}{N-\lambda + \lambda r}
    \leq
    p_T\!\cdot\! \frac{1+4\alpha^{-1}}{r},
\end{align}
where for the last inequality in~\eqref{eq:pk-moran} we used that
$e_t \geq n^{2/3}|W_t|/3 = n^{2/3}k/3$,
$N-\lambda + \lambda r\leq rN$,
and
$\deg_w = (1+\alpha^{-1})n^{2/3}\leq (4/3)n^{2/3}$, as $\alpha\geq 3$;
whereas for the last inequality in~\eqref{eq:qk-moran} we used that $e'_t \leq 3e_t \alpha^{-1}$, and that for any $v\in V$, $\deg_w/\deg_v\leq 1+\alpha^{-1}\leq 4/3$.
We will assume that
\[
    \frac{1+4\alpha^{-1}}{r} < 1,
\]
because otherwise the lemma holds trivially.
Hence, $p_T > q_T$.

We define the birth-and-death Markov chain $\CC = (C_0,C_1,\ldots)$ as follows.
The state space is $\{0,\ldots,\kappa\}$; the initial state is $C_0 = k_0$; there is one absorbing state, 0; and the birth/death probabilities at state $k$ are
\begin{align*}
    b_k
    =
    \Pr[C_{t+1} = k+1\mid C_t=k]
    &=
    \frac{k}{8N},
    \qquad
    \text{for } 0< k<\kappa,
\\
    d_k
    =
    \Pr[C_{t+1} = k-1\mid C_t=k]
    &=
    b_k\cdot
    \frac{1+4\alpha^{-1}}{r},
    \qquad
    \text{for } 0< k\leq\kappa.
\end{align*}
Comparing to the corresponding probabilities~\eqref{eq:pk-moran} and~\eqref{eq:qk-moran}, we have that $b_k \leq p_T/2$ and $d_k/b_k \geq q_{T}/p_{T}$.
Intuitively, this means that in $\CC$, transitions towards $\kappa$ are less likely than in the Moran process, while the relative rate of transitions towards zero is higher.
The next claim formalizes a consequence of this intuition.

\begin{claim}
    \label{clm:coupling}
    There is a coupling of processes $\MM  = (M_0, M_{1},\ldots)$ and $\CC = (C_0,C_1,\ldots)$ such that for any $t\geq 0$, it holds\footnote{With a slight abuse of notation, we will use the same symbols, $\MM$ and $\CC$, to denote the coupled copies of the two processes.}
    $C_t \leq \max_{0\leq j\leq t}\{|W_j|\}$.
\end{claim}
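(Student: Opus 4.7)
My plan is to build the coupling by induction on $t$, maintaining the invariant $C_t \le m_t$, where $m_t = \max_{0\le j\le t}|W_j|$. The base case is immediate since $C_0 = k_0 = |W_0| = m_0$. For the inductive step, I would use a shared uniform $[0,1]$ random variable $U_t$ and partition the interval into sub-intervals, each indexed by a joint outcome (Moran transition, $\CC$ transition). The sub-interval sizes must be chosen so that (a) the marginal of $M_{t+1}$ given $M_t$ follows the Moran process, (b) the marginal of $C_{t+1}$ given $C_t$ follows the prescribed birth-death chain with rates $b_{C_t}$, $d_{C_t}$, and (c) the invariant is preserved.

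The heart of the construction is to arrange the partition so that any $\CC$-birth occurs only jointly with a Moran event of the form ``some mutant in $W$ reproduces on some non-mutant in $W$'' (an event whose probability $p_T$ lower-bounds the overall $\MM$ birth probability). This nesting drives the invariant: if $C_t = m_t$ and $|W_t| = m_t$, then a $\CC$-birth is accompanied by $|W_{t+1}| = m_t + 1$, so $m_{t+1} = C_{t+1}$. The inequality $b_k \le p_T/2$ from~\eqref{eq:pk-moran} (at $k = |W_t|$) supplies the slack required for this nesting, since $\CC$'s birth probability is at most half of $p_T$. In parallel, the companion inequality $d_k/b_k \ge q_T/p_T$ from~\eqref{eq:qk-moran} ensures that $\CC$ is biased toward death at least as strongly as $\MM$, so that $\CC$ never drifts ahead of $m_t$ via excess births between drops of $|W|$.

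The main obstacle is the configuration in which $|W_t| < m_t$ while $C_t = m_t$: here the bound $b_{C_t} \le p_T/2$ need not hold, because it was derived from $k = |W_t|$, not $k = C_t$, and a $\CC$-birth could not be covered by an $\MM$ event that increases $m$. I would handle this by strengthening the induction hypothesis to the stronger joint property $C_t = m_t \Rightarrow |W_t| = m_t$. Preserving this strengthened invariant requires that, whenever $|W|$ is at the running max and $\MM$ decrements it, the coupling also force $\CC$ to decrement; the derived slack $d_k \le q_T/2$ (a consequence of the two displayed inequalities together) gives enough room inside the $\MM$-death event to place this forced $\CC$-death while still matching $\CC$'s prescribed marginal death probability $d_k$, and the leftover ``$\MM$-death only'' probability can be matched to a $\CC$-stay in configurations where $C_t < m_t$, where the invariant has slack.

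To conclude, I would verify by a case analysis on the pair of transitions that the constructed joint kernel is a valid probability distribution at every step, that both marginals reproduce the Moran process and the prescribed chain $\CC$, and that the strengthened invariant $C_t = m_t \Rightarrow |W_t| = m_t$ (and hence the claimed $C_t \le m_t$) is preserved. This yields the claim.
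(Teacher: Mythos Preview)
Your step-by-step synchronous coupling has a genuine gap at the heart of the inductive step. To preserve the strengthened invariant $C_t = m_t \Rightarrow |W_t| = m_t$ when $C_t = |W_t| = m_t = k$ and $\MM$ decrements $|W|$, you must force a $\CC$-death with conditional probability~$1$; since the total $\CC$-death mass at state $k$ is $d_k$, this requires $d_k \ge q_T$. Your asserted ``slack'' $d_k \le q_T/2$ points the wrong way for that purpose, and in any case does \emph{not} follow from the two displayed inequalities $b_k \le p_T/2$ and $d_k/b_k \ge q_T/p_T$ (those yield only the \emph{lower} bound $d_k \ge b_k\, q_T/p_T$). Worse, the needed inequality $d_k \ge q_T$ is simply false here: from the definitions one has $q_T \ge e_t/(\deg_w\cdot rN) \ge k/(4rN)$, whereas $d_k = (1+4\alpha^{-1})k/(8rN)$, so $q_T > d_k$ as soon as $\alpha > 4$. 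Hence the leftover event ($\MM$-death, $\CC$-stay) is unavoidable in a one-step coupling; it lands you at $C_{t+1} = m_{t+1}$ with $|W_{t+1}| < m_{t+1}$, and a subsequent $\CC$-birth (whose rate $b_{C_{t+1}}$ now need not be dominated by $p_T$, since it is keyed to $C_{t+1}$ rather than $|W_{t+1}|$) breaks $C \le m$.

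The paper sidesteps this by abandoning one-to-one synchrony: its coupling is \emph{asynchronous}, letting $\CC$ take several steps per step of $\MM$, and maintains the tighter invariant $C_{\sigma_t}\in\{|W_t|,0\}$ at checkpoints $\sigma_t$. When $\MM$ decrements $|W|$, $\CC$ performs a geometric number of ``stay'' steps followed by a death, so a $\CC$-death is guaranteed \emph{eventually} even though $d_k < q_T$; when $\MM$ increments, $\CC$ likewise does geometric stays and then moves to $k\pm1$ with the tuned probability $x_k = b_k(p_T+q_T)/\bigl(p_T(b_k+d_k)\bigr)$, where both inequalities are used to ensure $x_k\le1$. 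The nontrivial part of the paper's argument is checking that these geometric-plus-Bernoulli constructions reproduce the correct marginals $(1-b_k-d_k)^{j-1}b_k$ and $(1-b_k-d_k)^{j-1}d_k$ for $\CC$.
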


The above claim implies that if $C_t = \kappa$, then $|W_j|\geq \kappa$ for some $j\leq t$, i.e., $h_S \leq t$.
Hence, a lower bound on the probability that $\CC$ hits $\kappa$ within $t$ steps, is also a lower bound on the probability that $h_s \leq t$.
The next claim provides such a lower bound for $\CC$.

\begin{claim}
    \label{clm:hitting}
    With probability at least $\frac{r-1-4\alpha^{-1}}{2r}$,
    $\CC$ hits $\kappa$ in at most
    $\frac{32Nr\log(\alpha n)}{r-1-4\alpha^{-1}}$ steps.
\end{claim}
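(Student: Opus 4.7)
My plan is to prove Claim~\ref{clm:hitting} by separating it into a gambler's-ruin bound on the hitting probability and a tight bound on the conditional expected hitting time, then combining via Markov's inequality. Let $\tau = \min\{t : C_t \in \{0, \kappa\}\}$ and $\rho = (1+4\alpha^{-1})/r \in (0,1)$. Since $d_k = \rho\, b_k$ at every state $k$, a one-line check shows that $Z_t = \rho^{C_t}$ is a bounded martingale. Optional stopping at $\tau$ gives the standard gambler's-ruin formula $\Pr[C_\tau = \kappa] = (1-\rho^{k_0})/(1-\rho^\kappa) \geq 1-\rho = (r-1-4\alpha^{-1})/r$, which handles the hitting-probability part.

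For the conditional expected time, I plan to use the Doob $h$-transform with $h(k) = (1-\rho^k)/(1-\rho^\kappa)$: the conditional law of $\CC$ given $\{C_\tau = \kappa\}$ is itself a Markov chain with transition probabilities $p_k^h = b_k(1-\rho^{k+1})/(1-\rho^k)$ (up), $q_k^h = d_k(1-\rho^{k-1})/(1-\rho^k)$ (down), and the same stay-probability as the original. A direct computation using $d_k = \rho\,b_k$ shows that $p_k^h + q_k^h = b_k(1+\rho) = k(1+\rho)/(8N)$, so the move-rate of the conditional chain equals that of the original; and $q_k^h/p_k^h \leq \rho$, so the non-stay conditional walk is biased toward $\kappa$ at least as strongly as a simple random walk with up-probability $1/(1+\rho)$ and down-probability $\rho/(1+\rho)$. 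A standard geometric-return argument then shows the return probability to any state $k$ after leaving it in the non-stay conditional chain is at most $2\rho/(1+\rho)$, so the expected number of visits to every $k \in \{1,\ldots,\kappa-1\}$ is at most $(1+\rho)/(1-\rho)$. Each such visit accounts for a geometric number of stays in the original chain with mean $1/(b_k+d_k) = 8N/(k(1+\rho))$, and summing gives
\[ E[\tau \mid C_\tau = \kappa] \;\leq\; \sum_{k=1}^{\kappa-1} \frac{1+\rho}{1-\rho}\cdot\frac{8N}{k(1+\rho)} \;=\; \frac{8N\,H_{\kappa-1}}{1-\rho} \;\leq\; \frac{16N\log(\alpha n)}{1-\rho}, \]
using $H_{\kappa-1} \leq \ln\kappa + 1 \leq 2\log(\alpha n)$ since $\kappa \leq \alpha n$. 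The right-hand side is exactly $T^*/2$ with $T^* = 32Nr\log(\alpha n)/(r-1-4\alpha^{-1})$.

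Applying Markov's inequality to the conditional distribution then gives $\Pr[\tau > T^* \mid C_\tau = \kappa] \leq 1/2$, so
\[ \Pr[\tau \leq T^* \text{ and } C_\tau = \kappa] \;=\; \Pr[C_\tau = \kappa]\cdot\Pr[\tau \leq T^* \mid C_\tau = \kappa] \;\geq\; (1-\rho)\cdot\tfrac12 \;=\; \frac{r-1-4\alpha^{-1}}{2r}, \]
which is the claim.

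The main obstacle I expect is the conditional expected time bound. The naive approach of bounding $E[\tau]$ unconditionally and using $E[\tau \mid C_\tau = \kappa] \leq E[\tau]/\Pr[C_\tau = \kappa]$ picks up an extra $1/(1-\rho)$ factor that loses the claim, and simple Markov on $E[\tau]$ alone only gives a tail bound of the form $O(1)$ rather than $(1-\rho)/2$. Going through the Doob $h$-transform is what yields a tight bound, and the crucial structural feature making the harmonic sum $\sum_k 1/k$ (rather than something much larger) appear is precisely that the transformed chain preserves the state-dependent move-rate $k(1+\rho)/(8N)$ of the original; verifying this identity cleanly via $d_k = \rho\,b_k$ is the key calculation.
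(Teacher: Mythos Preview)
Your argument is correct and lands on the right constants, but it is more elaborate than necessary, and the obstacle you describe is a misconception. You assert that ``simple Markov on $E[\tau]$ alone only gives a tail bound of the form $O(1)$ rather than $(1-\rho)/2$''; in fact, Markov's inequality applied at the threshold $T^* = 2\,E[\tau]/(1-\rho)$ gives exactly $\Pr[\tau > T^*] \le (1-\rho)/2$, and combining this with $\Pr[C_\tau = \kappa] \ge 1-\rho$ via $\Pr[A\cap B] \ge \Pr[A]+\Pr[B]-1$ yields the claimed $(1-\rho)/2$. This is precisely the paper's route: it computes the \emph{unconditional} expected absorption time by a direct recurrence (showing that the expected time to exit $\{1,\ldots,k\}$ from $k$ is $O(N/k)$, hence $E[\tau] = O(N\log\kappa)$ with no $1/(1-\rho)$ factor), then applies Markov at threshold $2E[\tau]/(1-\rho)$ and union-bounds with the gambler's-ruin event. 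The $1/(1-\rho)$ appears only once, in the choice of threshold --- not in $E[\tau]$ itself --- so nothing is lost.

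Your Doob $h$-transform route recovers the same bound but at the cost of extra machinery: the move-rate identity $p_k^h+q_k^h = b_k(1+\rho)$, the ratio bound $q_k^h/p_k^h \le \rho$, and the ``standard geometric-return argument,'' which in fact needs a line of justification (e.g.\ a coupling with the simple $\rho$-biased walk on $\mathbb{Z}$, to show the return-from-above probability is at most $\rho$ when the down/up ratio is at most $\rho$ at every state). None of this is wrong, but none of it is needed. What your approach \emph{does} buy is a bound on the \emph{conditional} expected hitting time directly, which is conceptually cleaner if one wants that quantity for its own sake; for the claim as stated, the paper's unconditional computation plus union bound is shorter and avoids the step you left unproved.
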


From Claim~\ref{clm:coupling} it follows that the probability of having $h_S \leq t= \frac{32Nr\log(\alpha n)}{r-1-4\alpha^{-1}}$ is greater or equal to the probability that $\CC$ hits $\kappa$ within $t$ steps, and the latter probability is at least $\frac{r-1-4\alpha^{-1}}{2r}$, by Claim~\ref{clm:hitting}.
This proves Lemma~\ref{lem:expansion}.
It remains to prove Claims~\ref{clm:coupling} and~\ref{clm:hitting}.

\paragraph{Proof of Claim~\ref{clm:coupling}: The Coupling of $\MM$ and $\CC$.}

We just need to specify the coupling of the two processes for as long as both conditions $C_t > 0$ and $|W_t| < \kappa$ hold.
The reason is that if $C_t = 0$ or $|W_t| \geq \kappa$, then for all $t'\geq t$ we have respectively $C_{t'} = 0 < 1 \leq \max_{0\leq j\leq t'}\{|W_j|\}$ or $C_{t'} \leq \kappa \leq  \max_{0\leq j\leq t}\{|W_j|\}$, regardless of the coupling; hence, we can let the two processes take steps independently 
from that point on.

The coupling is as follows.
As long as the value of $\CC$ is smaller than the number $k$ of mutants in $W$, we let $\CC$ take steps (independently of $\MM$), until it hits $k$ or the absorbing state 0; we refer to these steps as \emph{catch-up} steps.
When the state of $\CC$ is the same as the number $k$ of mutants in $W$, we first let $\MM$ do a step.
Suppose this is the $(t+1)$-th step of $\MM$, which moves from state $M_t$ to $M_{t+1}$, and suppose $M_{t} = T$; also, as already mentioned, $|W_t| = k\in\{1,\ldots,\kappa-1\}$.
Then we let $\CC$ do one or more steps as follows, depending on the step of $\MM$.

\begin{itemize}
  \item
    If $|W_{t+1}| = |W_t| = k$ then $\CC$ does just one step; this step does not change the state of $\CC$, which remains in state $k$.
  \item
    If $|W_{t+1}| = k+1$ then $\CC$ first does zero or more steps in which it remains in state $k$, and then moves to either $k+1$ or $k-1$.
    Precisely, we perform a sequence of independent coin tosses until a head comes up.
    The first coin has probability of  heads $\frac{b_k+d_k}{p_T+q_T}$, and the rest $b_k+d_k$.%
    \footnote{Note that both quantities $\frac{b_k+d_k}{p_T+q_T}$ are $b_k+d_k$ are at most 1, because $d_k\leq b_k \leq p_T/2$.}
    Chain $\CC$ does one step for each coin toss.
    For each tail, it remains at state $k$, while for the head, it moves to $k+1$ with probability
    \[
        x_k = \frac{b_k(p_T+q_T)}{p_T(b_k+d_k)},
    \]
    and to $k-1$ with probability $1-x_k$.
    (Note that $x_k\leq 1$, because $q_T/p_T\leq d_k/b_k$.)

  \item
    If $|W_{t+1}| = k-1$ then $\CC$ first does zero or more steps in which it remains in state $k$, and then moves to state $k-1$.
    As in the previous case,  we perform a sequence of coin tosses until a head comes up, where the first coin has probability of heads $\frac{b_k+d_k}{p_T+q_T}$, and the rest $b_k+d_k$.
    For each tail, $\CC$ does a step in which it remains at $k$, while for the head it moves to $k-1$.
\end{itemize}

Before we show that the above coupling is valid, i.e., it yields the correct marginal distribution for the processes, we argue that it satisfies the desired inequality $C_t \leq \max_{0\leq j\leq t}\{|W_j|\}$.
Let $\sigma_t$, for $t\geq0$, denote the number of steps that $\CC$ does (from the beginning) until $\MM$ does its $(t+1)$-th step.
We argue by induction that for $0\leq t\leq h_S$,
\[
    C_{\sigma_t} \in\{|W_t|,0\}
    \quad\text{and}\quad
    C_{\sigma_t} \leq \max_{0\leq j\leq t}\{|W_j|\}.
\]
These two equations hold for $t=0$, because  $C_0 = k_0= |W_0|$ and $\sigma_0 = 0$.
Moreover, if they hold for some $t\geq 0$, for which $|W_t|=k<\kappa$ and $C_t>0$, then the coupling rules ensures that in its next $\sigma_{t+1}-\sigma_t$ steps, $\CC$ will not exceed $k+1$ and will hit $k+1$ or $0$ at step $\sigma_{t+1}$, if $|W_t+1| = k+1$; while if $|W_t+1| = k-1$, $\CC$ will stay at $k$ and only move to $k-1$ at step $\sigma_{t+1}$.
Thus the desired inequalities hold also for $t+1$, completing the inductive proof.

It remains to argue that the coupling is valid.
Namely, we must show that the transitions in $\CC$ have the correct distribution.
By definition, the catch-up steps by $\CC$ have the right transition probabilities, so we just need to argue about the remaining steps.

Suppose $\CC$ is at state $0<k<\kappa$, $\MM$ is at state $T$ with $k$ mutants in $W$, and it is $\MM$'s turn to do a step.
The probability that $\CC$ moves to $k+1$ next and the transition occurs after $j\geq 1$ steps (in the first $j-1$ steps, the chain remains at state $k$) must be
\[
    (1-b_k-d_k)^{j-1}b_k.
\]
The corresponding probability our coupling yields equals the probability that all the following happen: (a)~for some $l<j$, we have $l$ consecutive steps of $\MM$ in which the number of mutants in $W$ remains $k$, and thus we have the same number of steps by $\CC$ in which it remains at $k$; (b)~the number of mutants increases to $k+1$ in the next step of $\MM$; (c)~we then perform $j-l$ coin tosses for $\CC$ before a head comes up, as described in the coupling; and, last, (d)~we do a successful Bernoulli trial with success probability  $x_k$.
The probability of the above joint event is
\begin{align}
    \label{eq:big-prob}
    &\sum_{l=0}^{j-2}
    (1-p_T-q_T)^l
    p_T
    \left(1-\frac{b_k+d_k}{p_T+q_T}\right)
    (1-b_k-d_k)^{j-l-2}
    (b_k+d_k)
    x_k
    \notag\\
    +\ &
    (1-p_T-q_T)^{j-1}
    p_T
    \frac{b_k+d_k}{p_T+q_T}
    x_k.
\end{align}
For $j=1$, the sum in the first line is empty, and the term  in the second line is $b_k$ because $x_k = \frac{b_k(p_T+q_T)}{p_T(b_k+d_k)}$.
So, in this case the probability above equals $(1-b_k-d_k)^{j-1}b_k$, as required.
For $j> 1$, the sum in the first line of~\eqref{eq:big-prob} becomes as follows after substituting $x_k$'s value and rearranging, and then computing the resulting sum:
\begin{align*}
    &
    (1-b_k-d_k)^{j-1}b_k \cdot\sum_{l=0}^{j-2}
    \left(\frac{1-p_T-q_T}{1-b_k-d_k}\right)^l
    \left(1-\frac{b_k+d_k}{p_T+q_T}\right)
    (1-b_k-d_k)^{-1}
    (p_T+q_T)
    \\
    =\ &
    (1-b_k-d_k)^{j-1}b_k \cdot
    \left(1 - \left(\frac{1-p_T-q_T}{1-b_k-d_k}\right)^{j-1}\right).
\end{align*}
The term in the second line of~\eqref{eq:big-prob} is
\[
    (1-p_T-q_T)^{j-1}
    p_T
    \frac{b_k+d_k}{p_T+q_T}
    x_k
    =
    (1-b_k-d_k)^{j-1}b_k
    \cdot
    \left(\frac{1-p_T-q_T}{1-b_k-d_k}\right)^{j-1}.
\]
Combining the above yields that the probability in~\eqref{eq:big-prob} equals $(1-b_k-d_k)^{j-1}b_k$, as desired.

In a similar fashion, we show that the correct probability of
\[
    (1-b_k-d_k)^{j-1}d_k
\]
is ensured by our coupling for $\CC$ moving from state $k$ to $k-1$ after $j\geq 1$ steps (in the first $j-1$ steps the chains remains at $k$).
The expression for this probability is identical to~\eqref{eq:big-prob}, except that in each term of the sum in the first line, and also in the term in the second line, the factor $p_T x_k$ is replaced by $q_T + p_T(1-x_k)$.
By observing that
\[
    q_T + p_T(1-x_k)
    =
    p_Tx_k\cdot \frac{d_k}{b_k},
\]
the desired result follows immediately from the result for the transition from state $k$ to $k+1$.
This completes the argument that the coupling is valid, and also the proof of Claim~\ref{clm:coupling}.
\qed

\paragraph{Proof of Claim~\ref{clm:hitting}: Analysis of $\CC$.}

Recall that the birth and death probabilities in $\CC$ are respectively
\[
    b_k
    =
    bk
    \quad\text{and}\quad
    d_k
    =
    \gamma bk,
    \quad\text{where }
    b = 1/(8N)
    \quad\text{and}\quad
    \gamma
    =
    (1+4\alpha^{-1})/r < 1
    .
\]
First we compute the probability that $\CC$ hits $\kappa$ (before hitting the absorbing state 0), without concern about time.
This probability is the same as the probability that a biased random walk on the line hits $\kappa$ before $0$, when the walk starts at $k_0$, and the transition probabilities associated with moving right and left are respectively
\[
    p
    =
    \frac{b_k}{b_k+d_k}
    =
    \frac{1}{1+\gamma},
    \qquad
    1-p =
    \frac{\gamma}{1+\gamma}.
\]
It is well know~(see, e.g., \cite{levin2009markov}) that the probability of this walk to hit $\kappa$ before $0$ is
\[
    \frac{1-\left(\frac{1-p}{p}\right)^{k_0}}{1-\left(\frac{1-p}{p}\right)^{\kappa}}
    =
    \frac{1-\gamma^{k_0}}{1-\gamma^{\kappa}}
    \geq 1-\gamma,
\]
where the last inequality holds because $\gamma < 1$ and $k_0\geq 1$.

Next we compute the expected time before $\CC$ hits set $\{0,\kappa\}$; let $T_k$ denote this expected time if $C_0 = k$.
For $0\leq k <\kappa$,  let $\tau_{k}$ be the expected time before $\CC$ hits $\{0,k+1\}$ if $C_0 = k$.
Then $T_k = \sum_{k\leq j<\kappa}\tau_j$.
We show by induction that
\begin{equation}
    \label{eq:tauk}
    \tau_k
    =
    \frac{1}{b(1+\gamma)}\cdot
    \sum_{1\leq i\leq k} \frac1i{\left(\frac{\gamma}{1+\gamma}\right)^{k-i}}.
\end{equation}
This is true for $k=0$, as $\tau_0=0$.
Assume it is true for some $0\leq k<\kappa-1$.
For $\tau_{k+1}$ we have
\[
    \tau_{k+1} = (1-b_{k+1}-d_{k+1})\tau_{k+1} + d_{k+1}\tau_k + 1.
\]
Solving for $\tau_{k+1}$ and substituting $b_{k+1} = bk$ and $d_{k+1} = \gamma b k$,  yields
\[
    \tau_{k+1} = \frac{\gamma}{1+\gamma} \tau_k + \frac{1}{b(1+\gamma)(k+1)}.
\]
Substituting then the value of $\tau_k$ from the induction hypothesis gives the desired expression for $\tau_{k+1}$, completing the proof of~\eqref{eq:tauk}.
We now substitute $b=1/(8N)$ in~\eqref{eq:tauk} and use that $\gamma \leq 1$ to obtain
\[
    \tau_k
    \leq
    8N\cdot
    \sum_{1\leq i\leq k} \frac1{i2^{k-i}}
    =
    8N\cdot
    \frac3k
    .
\]
The expected time before $\CC$ hits set $\{0,\kappa\}$ from $k_0\geq 1$ is then
\[
    T_{k_0}
    =
    \sum_{k_0\leq k<\kappa}\tau_k
    \leq
    \sum_{1\leq k<\kappa} {24N}/{k}
    \leq
    24N\log(2\kappa)
    =
    16N\log(\alpha n).
\]

We turn the above expectation result into a probabilistic bound using Markov's inequality.
If $t_{k_0}$ is the time before $\CC$ hits $\{0,\kappa\}$ from $k_0$,
\[
    \Pr\left[t_{k_0} > \frac{2T_{k_0}}{1-\gamma}\right] < \frac{1-\gamma}{2}.
\]
So, with probability at least $1 - \frac{1-\gamma}2$, the number of steps before $\CC$ hits $\{0,\kappa\}$ is at most $\frac{2T_{k_0}}{1-\gamma}\leq \frac{32Nr\log(\alpha n)}{r-1-4\alpha^{-1}}$.
Combining this with the result we showed earlier, that the probability of $\CC$ hitting $\kappa$ (before $0$) is at least $1-\gamma$, and applying a union bound, we obtain that with probability at least
$1-\gamma - \frac{1-\gamma}2 = \frac{1-\gamma}2$, $\CC$ hits $\kappa$ in at most
$\frac{32Nr\log(\alpha n)}{r-1-4\alpha^{-1}}$ steps.
This completes the proof of Claim~\ref{clm:hitting}.
\qed

\subsection{Proof of Lemma~\ref{lem:key}: Fixation and Harmonic Volume of Mutants}
\label{sec:lem:key}

We reduce the problem of finding a lower bound on the fixation probability, to that of bounding the ruin probability for a variation of the gambler's ruin problem.

We consider the following variation of the gambler's ruin problem.
The player (gambler) starts with an initial fortune of $X_0$ coins, and the initial capital of the house is $m - X_0$ coins.
The player gambles until he owns all the $m$ coins, or runs out of coins---he is \emph{ruined}.
Suppose  the fortune of the player after $i$ gambles is
$X_i = k \in\{1,\ldots, m-1\}$.
In the $(i+1)$-th gamble, the player chooses a \emph{bet} amount $b_i$ and a \emph{profit} amount $p_i$, such that
\begin{equation}
    \label{eq:bp}
    b_i,p_i\in\{1,\ldots,\sigma\},
    \qquad
    b_i \leq k,
    \qquad
    p_i \leq m-k,
\end{equation}
where $\sigma$ is a fixed \emph{maximum stake}.
If the player wins the gamble, his fortune increases to $X_{i+1} = k+p_i$, while if he loses his fortune drops to $X_{i+1} = k-b_i$.
The probability of winning is $\frac{b_i}{b_i+r p_i}$, and the probability of losing is $\frac{r p_i}{b_i+r p_i}$, where $r>1$ is fixed and determines the house edge \big(which is $\frac{r-1}{r+1}$\big).


Let $\pi_{\AA,k}$ denote the probability that the player is ruined, when he uses strategy $\AA$ to determine the bet and profit amounts in each gamble, and his initial fortune is $X_0 = k$.
We will show that for any strategy $\AA$,
\begin{equation}
    \label{eq:ruin}
    \pi_{\AA,k} \geq \frac{1-r^{-(m-k)/\sigma}}{1-r^{-m/\sigma}}.
\end{equation}
Before we prove that, we argue that it implies the lower bound of Lemma~\ref{lem:key} on the fixation probability.

Let $M_i^\ast$, for $i\geq 0$, denote the set of mutants in $G$ after $i$ steps of the Moran process, \emph{when we ignore any steps that do not change the set of mutant vertices}; the initial set of mutant vertices is $M^\ast_0 = S$.
Let $Y_i = hvol(V\setminus M_i^\ast)\cdot n!$, be the harmonic volume of the set of non-mutants after $i$ steps of this process, scaled by factor of $n!$; this factor ensures that $Y_i$ (and all quantities we specify later) take integer values.
We describe an instance of the gambler's game in terms of the above Moran process, such that the player's fortune after $i$ gambles is $X_i = Y_i$, for all $i\geq 0$.
In this game:
(i)~the initial fortune of the player is $X_0 = hvol(V\setminus S)\cdot n!=Y_0$;
(ii)~the total number of coins is $m = hvol(V)\cdot n!$;
(iii)~the maximum stake is $\sigma = n!/\delta$;
and
(iv)~for each $i\geq0$, the parameters and outcome of the $(i+1)$-th gamble are as follows.
Suppose that $X_i=Y_i$ (this is true for $i=0$; and assuming it is true for $i$, we will see it is true for $i+1$).
Consider the step from $M_i^\ast$ to $M_{i+1}^\ast$, and suppose that $u,v$ are the pair of adjacent vertices with $u\in M_i^\ast$ and $v\in V\setminus M_i^\ast$ which interact in that step, i.e., either $u$ reproduces on $v$ (and $M_{i+1}^\ast=M_i^\ast\cup\{v\}$), or $v$ reproduces on $u$ (and $M_{i+1}^\ast=M_i^\ast\setminus\{u\}$).
We set the bet and profit amounts in the $(i+1)$-th gamble to be
$b_{i} = n!/\deg_v$ and $p_i = n!/\deg_u$
(these values satisfy constraints~\eqref{eq:bp}).
The player wins the gamble if $v$ reproduces on $u$ in the step of the Moran process, and loses if $u$ reproduces on $v$.
The distribution of the outcome of the gamble is the correct one:
Given the pair $u,v$ of vertices that interact in the step of Moran process (which determine $b_i$ and $p_i$), the probability that $v$ reproduces on $u$ (and the player wins the gamble) is
$
    \frac{1/\deg_v}{1/\deg_v + r/\deg_u} = \frac{b_i}{b_i+r p_i},
$
while the probability that $u$ reproduces on $v$ (and the player loses) is
$
    \frac{r/\deg_u}{1/\deg_v + r/\deg_u} = \frac{r p_i}{b_i+r p_i}.
$
It also easy to verify that $X_{i+1} = Y_{i+1}$, for either outcome.

It follows that the fixation probability $\rho_G^S(r)$, which equals the probability that $\exists\,i, Y_i = 0$, is equal to the probability that $\exists\,i, X_i = 0$ (i.e., the ruin probability) for the game described above.
Applying now~\eqref{eq:ruin}, for $k=Y_0^\ast=hvol(V\setminus S)\cdot n!$, $m = hvol(V)\cdot n!$, and $\sigma = n!/\delta$, yields the desired lower bound on $\rho_G^S(r)$

It remains to prove~\eqref{eq:ruin}.
We observe that it suffices to consider strategies $\AA$ that are \emph{deterministic} and \emph{Markovian}, where the latter means that the choice of the bet and profit amounts in each gamble depends only on the current fortune of the player.
One can derandomize a strategy without increasing the ruin probability, by choosing at each gamble the pair of bet and profit amounts (among the $\sigma^2$ possible combinations) that minimizes the ruin probability conditionally on the past history.
It is sufficient to consider Markovian strategies because the state of the game is completely determined by the current fortune of the player.


Consider now an arbitrary deterministic Markovian strategy $\AA$, and let $b(k), p(k)$ denote the bet and profit amounts for the strategy, when the player's fortune is $k$.
For this strategy, sequence $X_0,X_1,\ldots$ is a Markov chain with state space $\{0,\ldots,m\}$, two absorbing states: $0$ and $m$, and two possible transitions from each state $k\notin\{0,m\}$: to $k+p(k)$ with probability $\frac{b(k)}{b(k)+r p(k)}$, and to $k-b(k)$ with probability $\frac{r p(k)}{b(k)+r p(k)}$.
It follows that  $\pi_{\AA,0} = 1$, $\pi_{\AA,m}=0$, and for $0<k<m$,
\begin{equation}
    \label{eq:pi}
    \pi_{\AA,k}
    =
    \pi_{\AA,k+p(k)}\cdot\frac{b(k)}{b(k) + r p(k)}
    +
    \pi_{\AA,k-b(k)}\cdot\frac{r p(k)}{b(k)+r p(k)}.
\end{equation}
Let $\pi_k = \frac{1-r^{-(m-k)/\sigma}}{1-r^{-m/\sigma}}$.
We have $\pi_0 = 1$, $\pi_m = 0$,
and we now show that for any $0 < k < m$,
\begin{equation}
    \label{eq:pitilda}
    \pi_k
    \leq
    \pi_{k+p(k)}\cdot\frac{b(k)}{b(k)+r p(k)}
    +
    \pi_{k-b(k)}\cdot\frac{r p(k)}{b(k)+r p(k)}.
\end{equation}
In the following we will write $p$ and $b$ instead of $p(k)$ and $b(k)$.
The right side of the above inequality equals
\begin{align*}
    \frac{1-r^{-(m-k-p)/\sigma}}{1-r^{-m/\sigma}}\cdot\frac{b}{b+r p}
    +
    \frac{1-r^{-(m-k+b)/\sigma}}{1-r^{-m/\sigma}}\cdot\frac{r p}{b+r p}
    =
    \frac{1 - r^{-(m-k)/\sigma}\cdot \frac{br^{p/\sigma}+ p r^{1-b/\sigma}}{b+r p}} {1-r^{-m/\sigma}}.
\end{align*}
Comparing the quantity on the right with
$\pi_k = \frac{1-r^{-(m-k)/\sigma}}{1-r^{-m/\sigma}}$, we see it suffices to show $\frac{br^{p/\sigma}+ p r^{1-b/\sigma}}{b+r p}\leq 1$ in order to prove~\eqref{eq:pitilda}.
We do so by showing that the following real function $f(x,y)$ is non-negative:
\begin{equation}
    \label{eq:f}
    f(x,y)
    :=
    x+r y - xr^{y}- y r^{1-x}
    \geq 0,
    \qquad \text{for } 0\leq x,y\leq1.
\end{equation}
This implies $f(b/\sigma,p/\sigma) \geq 0$, for $0 < b,p \leq \sigma$,
which yields the desired result, that $\frac{br^{p/\sigma}+ p r^{1-b/\sigma}}{b+r p}\leq 1$.
We now prove~\eqref{eq:f}.
For any $x,y$, simple calculations yield $f(x,0)=f(0,y)=f(1,1)=0$.
Next we compute the partial second derivatives of $f$:
\[
    \tfrac{\partial^2}{\partial x^2}f(x,y)
    =
    - yr^{1-x}\ln^2 r,
    \qquad
    \tfrac{\partial^2}{\partial y^2}f(x,y)
    =
    - xr^{y}\ln^2 r.
\]
Observe that $\tfrac{\partial^2}{\partial x^2}f(x,y)<0$ for $y>0$, and thus for any fixed $y>0$, $f(x,y)$ is a concave function of $x$.
Therefore, $f(x,1)$ is a concave function, and since $f(0,1)=f(1,1)=0$ at the boundary, we conclude that $f(x,1)\geq 0$, for $0\leq x\leq 1$.
Similarly, we have $\tfrac{\partial^2}{\partial y^2}f(x,y)<0$, for $x>0$, which implies that for any fixed $x>0$, $f(x,y)$ is a concave function of $y$.
From this, and the fact that the boundary values are $f(x,0) = 0$, and $f(x,1)\geq0$ as shown above, it follows that $f(x,y)\geq 0$, for $0\leq x,y\leq1$.
This completes the proof of~\eqref{eq:f}, and also of~\eqref{eq:pitilda}.

Next we compare~\eqref{eq:pi} and~\eqref{eq:pitilda} to show $\pi_{\AA,k}\geq \pi_k$.
Let $\Delta_k = \pi_{\AA,k} - \pi_k$.
By subtracting~\eqref{eq:pitilda} from~\eqref{eq:pi},
\begin{equation}
    \label{eq:Delta}
    \Delta_k
    \geq
    \Delta_{k+p(k)}\cdot\frac{b(k)}{b(k)+r p(k)}
    +
    \Delta_{k-b(k)}\cdot\frac{r p(k)}{b(k)+r p(k)}.
\end{equation}
We must show that $\Delta_k \geq 0$, for all $0\leq k\leq m$.
Suppose, for contradiction, that $\Delta_k < 0$ for some $k$, and let $\ell$ be such that $\Delta_\ell$ is minimal.
Note that $\ell\notin\{0,m\}$, because $\Delta_0 =1-1=0$ and $\Delta_m = 0-0=0$.
Let $\ell_0<\ell_1<\cdots<\ell_\lambda$, where  $\ell_0=\ell$, $\ell_\lambda = m$, and for each $0\leq i<\lambda$, $\ell_{i+1} = \ell_i+p_{\ell_i}$.
Applying~\eqref{eq:Delta} for $k=\ell_0$ and using the minimality of $\Delta_{\ell_0}$, we obtain that $\Delta_{\ell_1}$ is also minimal (and so is $\Delta_{\ell_0-b_{\ell_0}}$).
Repeating this argument for all $\ell_i$, yields $\Delta_{\ell_0}=\Delta_{\ell_1}=\cdots=\Delta_{\ell_\lambda}$, and thus $\Delta_m =\Delta_\ell <0$, which is a contradiction.
This proves that $\pi_{\AA,k}\geq \pi_k$ for all $0\leq k\leq m$, which
proves~\eqref{eq:ruin}, and completes the proof of Lemma~\ref{lem:key}.
\qed

\section{Upper Bound for Fixation on General Graphs}
\label{sec:upperbound}

In the section we prove Theorem~\ref{thm:upperBound}, which states that for any undirected graph $G=(V,E)$ with $|V|=n$ vertices, the fixation probability is
$\rho_G(r) = 1 -  \Omega(r^{-5/3} n^{-1/3} \log^{-4/3}n)$.

We will use the following shorthand notation.
For any vertex set $S\subseteq V$, we will write $\rho_S$ and $\zeta_S$ instead of $\rho_G^S(r)$ and $\zeta_G^S(r)$, respectively, for the probabilities of fixation and extinction in $G$, when the initial set of mutants is $S$.
Further, when $S=\{u\}$ is a singleton set, we will write just $\rho_u$ and $\zeta_u$.
By $\rho$ and $\zeta$ we will denote the fixation and extinction probabilities for a uniformly random initial mutant.


We start with a basic general lower bound on $\zeta_u$, in terms of the distance-2 neighborhood of $u$ (precisely, in term of the degrees of $u$'s neighbors and their neighbors).
For any pair of vertices $u\in V$ and $v\in \Gamma_u$,
let $T_u$ be the harmonic volume of $u$'s neighbors, and $T_{u,v}$ the harmonic volume of $v$'s neighbors other than $u$:
\[
    T_u = hvol(\Gamma_u),
    \qquad
    T_{u,v}
    =
    hvol\left(\Gamma_v\setminus\{u\}\right)
    =
    T_v - (\deg_u)^{-1}.
\]

\begin{claim}
    \label{clm:zetalb}
    For any vertex $u\in V$, we have
    $\displaystyle
        \zeta_u
        \geq
        \frac{T_u}{T_u+\frac r{\deg_u} \sum_{v\in \Gamma_u}\frac{2r}{2r+T_{u,v}}}.
    $
\end{claim}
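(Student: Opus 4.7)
My plan is to condition on the first state-changing transition out of the singleton $\{u\}$ and set up a renewal-style inequality. In a single step, each neighbor $w\in\Gamma_u$ reproduces on $u$ at a rate proportional to $1/\deg_w$ (summing to $T_u$), while $u$ reproduces on each $v\in\Gamma_u$ at a rate proportional to $r/\deg_u$ (summing to $r$). Conditioning on the first state change gives
\[
\zeta_u \;=\; \frac{T_u}{T_u+r} \;+\; \sum_{v\in\Gamma_u}\frac{r/\deg_u}{T_u+r}\,\zeta_{\{u,v\}}.
\]

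The central step is to lower-bound $\zeta_{\{u,v\}}$ by $\tfrac{T_{u,v}}{2r+T_{u,v}}\,\zeta_u$ for every $v\in\Gamma_u$. From the state $\{u,v\}$, the possible state-changing events are (a) a spread, in which $u$ or $v$ reproduces on a non-mutant neighbor, (b) $v$ dies, or (c) $u$ dies. Only the two mutants can spread and each contributes fitness $r$, so the total spread rate is at most $r(\deg_u-1)/\deg_u + r(\deg_v-1)/\deg_v < 2r$, while $v$ dies at the exact rate $T_{u,v}$. Consequently,
\[
\Pr\bigl[\text{spread happens first from }\{u,v\}\bigr] \;\leq\; \frac{s}{s+T_{u,v}} \;\leq\; \frac{2r}{2r+T_{u,v}},
\]
where I have dropped the contribution of (c) from the denominator (which only decreases this conditional probability) and then used that $x/(x+T_{u,v})$ is increasing in $x$ together with $s\le 2r$. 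Complementarily, with probability at least $T_{u,v}/(2r+T_{u,v})$ the process returns to a single-mutant state, from which the residual extinction probability is at least $\zeta_u$---directly so when $v$ dies; handled via the global minimizer $u^\ast$ of $\zeta_w$ combined with set-monotonicity $\zeta_{S}\ge \zeta_{S'}$ for $S\subseteq S'$ when $u$ dies, so that the bound obtained at $u^\ast$ propagates back through the recurrence.

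Substituting this bound into the recurrence yields $\zeta_u \geq A + B\,\zeta_u$ with $A=T_u/(T_u+r)$ and $B=\sum_{v\in\Gamma_u}\tfrac{r/\deg_u}{T_u+r}\cdot\tfrac{T_{u,v}}{2r+T_{u,v}}$. Using the identity $\tfrac{T_{u,v}}{2r+T_{u,v}} = 1 - \tfrac{2r}{2r+T_{u,v}}$ and that $\sum_{v\in\Gamma_u} r/\deg_u = r$, the quantity $1-B$ collapses to
\[
1-B \;=\; \frac{1}{T_u+r}\Bigl(T_u + \frac{r}{\deg_u}\sum_{v\in\Gamma_u}\frac{2r}{2r+T_{u,v}}\Bigr),
\]
and solving $\zeta_u\ge A/(1-B)$ delivers the claimed inequality. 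I expect the main obstacle to be the rigorous handling of the ``$u$-dies'' branch in the second step, where $\zeta_v$ need not exceed $\zeta_u$; the minimizer-plus-monotonicity argument should handle this, but making it clean requires checking that the $A/(1-B)$ bound is preserved when transported from $u^\ast$ to arbitrary $u$.
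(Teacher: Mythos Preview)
Your recurrence for $\zeta_u$, the substitution, and the final algebra are all correct and identical to the paper's argument. The issue is in your justification of the key intermediate bound $\zeta_{\{u,v\}}\ge\frac{T_{u,v}}{2r+T_{u,v}}\,\zeta_u$.

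You stop at the \emph{first state-changing} event out of $\{u,v\}$ and partition into (a) spread, (b) $v$ dies, (c) $u$ dies. The estimate $\Pr[(b)\text{ or }(c)]\ge T_{u,v}/(2r+T_{u,v})$ is fine, but in branch~(c) you land in state $\{v\}$ with residual extinction probability $\zeta_v$, and there is no reason that $\zeta_v\ge\zeta_u$. Set-monotonicity only yields $\zeta_v\ge\zeta_{\{u,v\}}$, which is useless here. Your global-minimizer workaround establishes the claimed inequality only at the single vertex $u=u^\ast$; for a general $u$ the right-hand side depends on $u$ through $T_u$ and the $T_{u,v}$, and nothing allows you to transport the $u^\ast$-bound to produce that $u$-specific expression. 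The ``propagation through the recurrence'' is not a real argument, so this is a genuine gap.

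The fix---and this is exactly what the paper's one-line justification encodes---is to \emph{not} stop when $u$ dies. Wait instead for the first time that either (i)~$v$ is replaced by a non-mutant, or (ii)~one of $u,v$ reproduces. The event ``$u$ dies'' is neither (i) nor (ii); if it occurs we simply keep waiting. While the state is $\{u,v\}$ the rate of (i) is $T_{u,v}$ and that of (ii) is at most $2r$; if $u$ has already died (state $\{v\}$) the rate of (i) can only increase (to $T_v$) and that of (ii) drops to $r$. Either way, conditioned on (i)-or-(ii), event (i) has probability at least $T_{u,v}/(2r+T_{u,v})$. And when (i) occurs, the mutant set is a subset of $\{u\}$---either $\{u\}$ or $\emptyset$---so by monotonicity the residual extinction probability is at least $\zeta_u$. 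This yields $\zeta_{\{u,v\}}\ge\frac{T_{u,v}}{2r+T_{u,v}}\,\zeta_u$ directly, for every $u$, with no minimizer argument needed.
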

\begin{proof}
Let $\deg_u = d$, and let $v_1,\ldots,v_d$ denote $u$'s neighbors.
We have
\[
    \zeta_u = \frac{T_u}{r+T_u }  +  \frac {r/d}{r+T_u }\sum_{1\leq i\leq d}\zeta_{\{u,v_i\}},
\]
where $\frac{T_u}{r+T_u }$ is the probability that $u$ dies before it reproduces, and $\frac {r/d}{r+T_u }$ is the probability it reproduces on any given neighbor.
We lower bound $\zeta_{\{u,v_i\}}$ by assuming it is $0$ unless $v_i$ is replaced by a non-mutant before any of $u,v_i$ reproduce:
\[
    \zeta_{\{u,v_i\}}
    \geq
    \frac{T_{u,v_i}}{2r + T_{u,v_i}}\cdot \zeta_u.
\]
Substituting this in the expression for $\zeta_u$ above and solving for $\zeta_u$ completes the proof.
\end{proof}

Note that  Claim~\ref{clm:zetalb} implies the weaker bound of $\zeta_u \geq \frac{T_u}{T_u+r}$, which considers only the immediate neighbors of~$u$.

Next we bound the number of vertices of degree at least $d$ in terms of $\zeta$.
Let $n_d = |\{u\in V\colon \deg_u = d\}|$
be the number of vertices with degree exactly $d$,
and let
$N_d = \sum_{d'\geq d} n_{d'}$.
Let
\[
     \theta = \max\{d \colon N_{d} \geq d/2\}.
\]
Note that $N_{\theta+1} < (\theta+1)/2$, and thus $N_{\theta+1} \leq \theta/2$.
This implies that at least half of the neighbors of any vertex with degree at least $\theta$ have degree at most $\theta$.

\begin{claim}
    \label{clm:N}
    {\bf(a)}~$\theta \leq 6r n\zeta$; \
    {\bf(b)}~for $d \leq \theta$, $N_{d} \leq \frac{4r^2 \theta n\zeta}d$; and \
    {\bf(c)}~for $d > \theta$, $N_d\leq \frac{8r^2\theta n\zeta\cdot \log(2\theta)}d$.
\end{claim}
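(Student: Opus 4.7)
My plan is to derive all three parts from the lower bound on $\zeta_u$ in Claim~\ref{clm:zetalb}, applied in the weak form $\zeta_u \geq T_u/(T_u+r)$ (obtained by bounding $\frac{2r}{2r+T_{u,v}}\leq 1$ inside the denominator), and combined with the structural fact $N_{\theta+1}\leq\theta/2$ that follows from the definition of $\theta$. The main geometric observation used throughout is: any vertex $u$ with $\deg_u\geq d\geq\theta$ has at most $\theta/2$ neighbors of degree $>\theta$, so at least $\deg_u-\theta/2\geq d/2$ of its neighbors have degree $\leq\theta$ and each contributes $\geq 1/\theta$ to $T_u=hvol(\Gamma_u)$; hence $T_u\geq d/(2\theta)$ and $\zeta_u\geq d/(d+2r\theta)$.

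Part~(a) is this observation specialized to $d=\theta$: every $u$ with $\deg_u\geq\theta$ has $T_u\geq 1/2$ and $\zeta_u\geq 1/(1+2r)$, and summing over the $N_\theta\geq\theta/2$ such vertices gives $n\zeta\geq\theta/(2(1+2r))$, which rearranges to $\theta\leq 6rn\zeta$ for $r\geq 1$. Part~(c) is the same argument for general $d>\theta$: summing $\zeta_u\geq d/(d+2r\theta)$ over the $N_d$ vertices of degree $\geq d$ gives $N_d\leq n\zeta\,(1+2r\theta/d)$, which is at most $4r\theta n\zeta/d$ for $d\leq 2r\theta$ and at most $2n\zeta$ otherwise. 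Both estimates are dominated by the claim's $8r^2\theta n\zeta\log(2\theta)/d$ (the extra factor $2r\log(2\theta)$ appears slack; presumably the paper picks it up by aggregating dyadically over degree classes rather than via a single summation).

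Part~(b), the case $d\leq\theta$, is the delicate one, because for $u$ with $\deg_u<\theta/2$ the geometric observation is vacuous: $u$ could have all of its neighbors of high degree, and no nontrivial lower bound on $T_u$ is available from $\deg_u$ alone. My plan is twofold. First, the bound is trivial when $4r^2\theta n\zeta/d\geq n$, i.e., for $d\leq 4r^2\theta\zeta$, since $N_d\leq n$. Second, in the complementary nontrivial regime, invoke the full form of Claim~\ref{clm:zetalb}, whose denominator is $T_u+\frac{r}{\deg_u}\sum_{v\in\Gamma_u}\frac{2r}{2r+T_{u,v}}$: a double-counting argument over the edges between $\{u:\deg_u\geq d\}$ and $\{v:\deg_v\leq\theta\}$, again exploiting $N_{\theta+1}\leq\theta/2$, should show that for a typical $u$ many of its neighbors $v$ satisfy $T_{u,v}=\Omega(1)$, pulling the sum down by a factor of $r$ relative to the trivial bound $\leq\deg_u$ used in the weak form and gaining an extra factor of $r$ in $\zeta_u$. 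This is what produces the $r^2$ in the coefficient $4r^2$ after summation over vertices of degree $\geq d$. I expect this two-hop accounting to be the main technical hurdle; parts~(a) and~(c) follow cleanly from the geometric observation and the weak form of Claim~\ref{clm:zetalb}.
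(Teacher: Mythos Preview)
Your argument for part~(a) matches the paper's exactly. Your argument for part~(c) is \emph{different} from the paper's and in fact cleaner: the paper derives~(c) from~(b) by first bounding the total volume $D=\sum_{d\leq\theta}(N_d-N_{d+1})d\leq\sum_{d\leq\theta}N_d\leq 4r^2\theta n\zeta\log(2\theta)$ and then observing $N_d\cdot d/2\leq D$ for $d>\theta$; the $\log(2\theta)$ factor you noticed as ``slack'' is exactly the harmonic sum from this aggregation. Your direct route via $T_u\geq d/(2\theta)$ and the weak form $\zeta_u\geq T_u/(T_u+r)$ gives the sharper $N_d\leq(1+2r\theta/d)n\zeta$ and avoids any dependence on~(b).

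Part~(b), however, has a real gap. Your plan is to show that for a typical~$u$ ``many of its neighbors~$v$ satisfy $T_{u,v}=\Omega(1)$,'' but this is not enough: even if \emph{every} neighbor satisfies $T_{u,v}\geq c$, Claim~\ref{clm:zetalb} only yields $\zeta_u\geq T_u/(T_u+\tfrac{2r^2}{2r+c})$, which is still of order $T_u/r$. Since $T_u$ can be as small as $\deg_u/n$ (all neighbors of~$u$ can have very high degree), you get $\zeta_u\gtrsim\deg_u/(rn)$ rather than the required $\zeta_u\gtrsim\deg_u/(r^2\theta)$. No double-counting of edges to low-degree vertices fixes this, because such edges need not exist when $\deg_u<\theta$.

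The idea you are missing is to apply your geometric observation at the \emph{second hop}: for each neighbor~$v_i$ of~$u$ with $\deg_{v_i}=d_i>3\theta/2$, one has $T_{u,v_i}\geq \tfrac{2d_i}{3\theta}-1$ (and the bound is vacuously true for smaller~$d_i$). Substituting $2r+T_{u,v_i}\geq 2d_i/(3\theta)$ into Claim~\ref{clm:zetalb} gives
\[
\frac{r}{\deg_u}\sum_{v_i\in\Gamma_u}\frac{2r}{2r+T_{u,v_i}}
\;\leq\;
\frac{r}{\deg_u}\sum_{v_i\in\Gamma_u}\frac{3r\theta}{d_i}
\;=\;
\frac{3r^2\theta}{\deg_u}\,T_u,
\]
so that $T_u$ cancels and $\zeta_u\geq\deg_u/(\deg_u+3r^2\theta)$ \emph{uniformly}, independent of~$T_u$. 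Summing over $\{u:\deg_u\geq d\}$ then gives~(b). The crucial point is that $T_{u,v_i}$ scales with $d_i/\theta$, not merely $\Omega(1)$; it is this scaling that produces the cancellation of~$T_u$.
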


Observe that, if we ignore logarithmic factors and assume $r$ is bounded, the $\tilde\Omega(n^{-1/3})$ lower bound we wish to show for~$\zeta$, together with Claim~\ref{clm:N}, suggests $\theta = \tilde O(n^{2/3})$, and $N_{d} = \tilde O(n^{4/3}/d)$, for any $d$.

\begin{proof}[Proof of Claim~\ref{clm:N}]
{(a):}
For any vertex $u\in V$ of degree $d\geq \theta$,
\begin{equation}
    \label{eq:TuLarge}
    T_u
    \geq
    \sum_{v\in \Gamma_u: \deg_v\leq \theta} \frac1{\deg_v}
    \geq
    \frac{d-N_{\theta+1}}{\theta}
    \geq
    \frac {d-\theta/2}{\theta}
    \geq
    1/2,
\end{equation}
where for the second-last inequality  we used that $N_{\theta+1} < (\theta+1)/2$ and thus $N_{\theta+1}\leq \theta /2$.
Combining the above with $\zeta_u \geq \frac{T_u}{r+T_u }$, which follows from Claim~\ref{clm:zetalb}, yields
$
    \zeta_u
    \geq
    \frac{1}{2r+1}.
$
From this, we get
\[
    \sum_{u:\deg_u\geq \theta} \zeta_u
    \geq
    \frac{N_{\theta}} {2r + 1},
\]
and since the sum on left can be at most $n\zeta$, we have
$
    N_{\theta}
    \leq
    (2r+1)n\zeta.
$
Combining this with $N_{\theta}\geq \theta/2$, yields $\theta \leq 2(2r+1)n\zeta\leq 6rn\zeta$.

{(b):}
Let $u\in V$ and $k=\deg_u$.
Let $v_1,\ldots,v_k$ be the vertices adjacent to $u$, and let $d_1,\ldots, d_k$ be their respective degrees.
We argue that for each vertex $v_i$,
\[
    T_{u,v_i}
    \geq
    \frac{2d_i}{3\theta} - 1:
\]
This is trivially true if $d_i\leq 3\theta/2$, because the right side is at most $0$;
if $d_i>3\theta/2$, the number of neighbors of $v_i$ with degree at most $\theta$ is at least $d_i - N_{\theta+1} \geq  d_i - \theta/2 \geq 2d_i/3$, thus $T_{u,v_i} \geq \frac{2d_i/3 - 1}\theta \geq \frac{2d_i}{3\theta} - 1$.
Applying now Claim~\ref{clm:zetalb}, and using the inequality we have just showed, yields
\[
    \label{eq:zetau}
    \zeta_u
    \geq
    \frac{T_u}{T_u+\frac r{k} \sum_{1\leq i\leq k}\frac{2r}{2r+T_{u,v_i}}}
    \geq
    \frac{T_u}{T_u+\frac r{k} \sum_{1\leq i\leq k}\frac{2r}{2d_i/(3\theta)}}
    =
    \frac{T_u}{T_u+\frac r{k} (3r\theta\cdot T_u)}
    =
    \frac{k}{k+3r^2\theta}
    .
\]
The quantity on the right is at least $\frac k {(3r^2+1)\theta}$ if $k\leq \theta$,
and at least $\frac 1{(3r^2+1)}$ if $k\geq \theta$.
It follows that for $d\leq \theta$,
\[
    \sum_{u:\deg_u\geq d} \zeta_u
    \geq
    N_d\cdot \frac{d}{(3r^2+1)\theta},
\]
and since the sum on the left is at most $n\zeta$, we conclude that
$
    N_{d}
    \leq \frac{(3r^2+1)\theta n\zeta}d
    \leq \frac{4r^2 \theta n\zeta}d.
$

(c):
Let $D = vol(\{u\colon \deg_u\leq \theta\})$
be the sum of the degrees of all vertices that have degree at most $\theta$.
We first bound~$D$ using the result from part~(b):
\[
    D=
    \sum_{1\leq d\leq\theta} (N_d-N_{d+1})\cdot d
    \leq
    \sum_{1 \leq d\leq\theta} N_d
    \leq
    \sum_{1 \leq d\leq\theta} \frac{4r^2 \theta n\zeta}d
    \leq
    4r^2\theta n\zeta\cdot \log(2\theta).
\]
Next we observe that for $d>\theta$, any vertex $u\in V$ with degree $\deg_u\geq d$ has at least $d-N_{\theta+1} \geq d-\theta/2 > d/2$ neighbors of degree at most $\theta$.
It follows that $D\geq N_d\cdot d/2$.
Combining this with the upper bound for $D$ we computed above, yields
$N_d\leq \frac{8r^2\theta n\zeta\cdot \log(2\theta)}d$.
\end{proof}

An immediate corollary of Claim~\ref{clm:N}(a) and~(b), stated next, is that
$\rho \leq 1 - \Omega\left(r^{-3/2}n^{-1/2}\right)$.
This is a weaker bound than the one of Theorem~\ref{thm:upperBound}, but is already an improvement over the previous best bound~\cite{mertzios2013strong}.

\begin{corollary}
    For any undirected graph $G$ on $n$ vertices,
    the fixation probability is at most $1 - \frac{1}{5\sqrt{r^3n}}$.
\end{corollary}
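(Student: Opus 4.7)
The plan is to use parts (a) and (b) of Claim~\ref{clm:N} to squeeze $\theta$ from both sides in terms of $\zeta$, and then solve for $\zeta$.

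First I would apply part~(b) of Claim~\ref{clm:N} with $d=1$. Since $G$ is connected (and $n \geq 2$, the interesting regime), every vertex has degree at least $1$, so $N_1 = n$. The bound $N_d \leq \frac{4r^2\theta n\zeta}{d}$ at $d = 1$ (which is valid as $\theta \geq 1$, since $N_1 = n \geq 1/2$ forces $\theta \geq 1$ from its definition) gives
\[
    n \;\leq\; 4r^2\theta n\zeta,
    \qquad\text{i.e.,}\qquad
    \theta \;\geq\; \frac{1}{4r^2\zeta}.
\]

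Next I would combine this with the upper bound $\theta \leq 6rn\zeta$ from part~(a). Chaining the two inequalities yields
\[
    \frac{1}{4r^2\zeta} \;\leq\; \theta \;\leq\; 6rn\zeta,
\]
so $\zeta^2 \geq \frac{1}{24 r^3 n}$, and hence $\zeta \geq \frac{1}{\sqrt{24\, r^3 n}} \geq \frac{1}{5\sqrt{r^3 n}}$, using $\sqrt{24} < 5$. The corollary follows since $\rho = 1 - \zeta$.

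There is no real obstacle here; the only thing to double-check is that applying Claim~\ref{clm:N}(b) at $d=1$ is legitimate, i.e., that $\theta \geq 1$. This is immediate from the definition $\theta = \max\{d : N_d \geq d/2\}$ together with $N_1 = n \geq 1$. The argument is essentially one line once Claim~\ref{clm:N} is in hand, which matches the paper's remark that this weaker bound is an immediate corollary.
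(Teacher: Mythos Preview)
Your proof is correct and matches the paper's argument essentially line for line: apply Claim~\ref{clm:N}(b) at $d=1$ to get $n = N_1 \leq 4r^2\theta n\zeta$, combine with $\theta \leq 6rn\zeta$ from Claim~\ref{clm:N}(a), and solve for $\zeta$. Your explicit check that $\theta \geq 1$ (so that $d=1 \leq \theta$ is legitimate) is a nice touch that the paper leaves implicit.
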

\begin{proof}
From Claim~\ref{clm:N}(b) it follows that $N_1\leq 4r^2\theta n\zeta$.
Substituting $\theta \leq 6r n\zeta$, by Claim~\ref{clm:N}(a),  and $N_1=n$, we obtain $\zeta \geq 1/\sqrt{24r^3 n}$.
\end{proof}

We are now ready to prove our main result (this proof is a bit technical).

\subsection{Proof of Theorem~\ref{thm:upperBound}}
\label{sec:proofUB}

Let $V_d = \{u\in V\colon d \leq \deg_u < 2d \}$, and consider an $\ell\in\{1,\ldots,n\}$ for which
\begin{equation}
    \label{eq:Vell}
    |V_\ell|\geq \frac n{1+\log n} = \frac n{\log (2n)}.
\end{equation}
Such an $\ell$ clearly exists, because $|V|=n$ and $V = \bigcup_{0\leq i \leq \log n} V_{2^i}$.
For $0\leq i \leq \log n$, let
\[
    H_i = \left\{v\in V_{2^i} \colon T_v\geq
    \frac{n}{4\ell |V_{2^i}|\log^2 (2n)}\right\},
\]
and let  $H = \bigcup_{i} H_{i}$.
Finally,
\[
    L =\left\{u\in V_\ell \colon \zeta_u \leq 1/2,\, \Gamma_u\subseteq H \right\}.
\]

Next we show that $L$ is not much smaller than $V_\ell$.
For that, we first argue that $\sum_{v\in H} T_v$ is very close to $\sum_{v\in V} T_v = n$.
We have
\begin{align*}
    \sum_{v\notin H} T_v
    =
    \sum_{i}\sum_{v\in V_{2^i}\setminus H_{i}} T_v
    &
    \leq
    \sum_{i}\left(
        |V_{2^i}\setminus H_{i}|\cdot\frac{ n}{4\ell |V_{2^i}|\log^2 (2n)}
    \right)
    \\&
    \leq
    \sum_{0\leq i\leq \log n}
    \frac{n}{4\ell\log^2 (2n)}
    \leq
    \frac{n}{4\ell\log(2n)}.
\end{align*}
To lower bound $|L|$, we observe that the number $k$ of vertices $u\in V_\ell$ with $\Gamma_u\not\subseteq H$, and thus with at least one neighbor in $V\setminus H$, must satisfy  $\frac{k}{2\ell} \leq \sum_{v\notin H}$;
and using the bound on $\sum_{v\notin H}$ we showed above, we get
$k\leq \frac{n}{2\log(2n)}$.
Also for the number $k'$ of vertices $v$ with $\zeta_v> 1/2$, we have $k'/2 \leq n\zeta$, and thus $k' \leq 2n\zeta$.
It follows
\begin{equation*}
    |L|
    \geq
    |V_\ell| - \frac n{2\log(2n)} - 2n\zeta
    \geq
    \frac n{2\log(2n)} - 2n\zeta,
\end{equation*}
by~\eqref{eq:Vell}.
We can assume that $\zeta \leq \frac 1{8\log(2n)}$, because otherwise the theorem holds trivially.
Then
\begin{equation}
    \label{eq:cardL}
    |L|
    \geq
    \frac n{4\log(2n)}.
\end{equation}

Next we show a lower bound on $\zeta_u$, for $u\in L$.
From Claim~\ref{clm:zetalb}, for any vertex $u\in L$,
\begin{align*}
    \zeta_u
    \geq
    \frac{T_u}{T_u+\frac r{\deg_u} \sum_{v\in \Gamma_u}\frac{2r}{2r+T_{u,v}}}
    &\geq
    \frac{T_u}{2\frac {r}{\deg_u} \sum_{v\in \Gamma_u}\frac{2r}{2r+T_{u,v}}}
    \geq
    \frac{T_u}{2\frac {r}{\ell}\sum_{v\in \Gamma_u}\frac{2r}{T_{v}}}
    =
    \frac{\ell\cdot T_u}{4r^2\sum_{v\in \Gamma_u}T_{v}^{-1}},
\end{align*}
where for the second inequality we used that $\zeta_u\leq 1/2$ to obtain that $T_u$ is at most half the denominator.
Let $\lambda_i = |\Gamma_u \cap H_{i}|$ be the number of neighbors of $u$ in $H_{i}$, and recall that $\Gamma_u\subseteq H$.
We have
\begin{gather*}
    T_u \geq \sum_i \frac{\lambda_i}{2^{i+1}},
    \text{ and }
    \sum_{v\in \Gamma_u}T_{v}^{-1}
    \leq
    \sum_i \left(\lambda_i\left(\frac{n}{4\ell |V_{2^i}|\log^2 (2n)}\right)^{-1}\right).
\end{gather*}
Substituting these above gives
\[
    \zeta_u
    \geq
    \frac{\sum_i ({\lambda_i}/{2^{i+1}})}
    {16r^2\log^2 (2n)\sum_i (\lambda_i |V_{2^i}|)/n}.
\]
Next we use Claim~\ref{clm:N} to bound $|V_{2^i}|$.
We have
\[
    |V_{2^i}|
    \leq
    N_{2^i}
    \leq
    \frac{8r^2\theta n\zeta\cdot \log(2\theta)}{2^i}
    \leq
    \frac{48r^3 n^2\zeta^2 \log(2n)}{2^i},
\]
where the first inequality follows from Claim~\ref{clm:N}(b) and (c), and the second from Claim~\ref{clm:N}(a).
Substituting above yields
\[
    \zeta_u
    \geq
    \frac{\sum_i (\lambda_i/2^{i+1})}
    {768r^5 n\zeta^2 \log^3 (2n) \sum_i (\lambda_i/2^i)}
    =
    \frac{1}
    {1536r^5 n\zeta^2 \log^3 (2n)}.
\]
Summing now over all $u\in L$ and using that
$\sum_{u\in L}\zeta_u \leq n\zeta$,
we obtain
\[
    \frac{|L|}{1536r^5 n\zeta^2 \log^3 (2n)}
    \leq
    n\zeta.
\]
Substituting $|L|$ by its the lower bound $\frac n{4\log(2n)}$ from~\eqref{eq:cardL} and solving for $\zeta$ yields
\[
    \zeta\geq \frac{1}{19r^{5/3} n^{1/3} \log^{4/3} (2n)}.
\]
This completes the proof of Theorem~\ref{thm:upperBound}.
\qed

\section{Conclusion}
\label{sec:conclusion}

We have presented the first examples of non-directed networks that act as strong amplifier of selection or strong suppressor for the classic Moran process on structured populations.
This process is used to model the spread of mutations in populations of organisms, as well as somatic evolution in multicellular organism.
Strong amplifiers have the surprising property that an advantageous mutation will almost surely spread to the whole network, however small its selective advantage; while in strong suppressors, the selective advantage of a mutation has negligible effect on the probability it prevails.
The Moran process, can also be viewed as a basic, general diffusion model, for instance, of the spread of ideas or influence in a social network.
In this context, our results suggest network structures which are well suited for the spread of favorable ideas, and structures where the selection of such ideas is suppressed.

We have showed that the our strong amplifier is existentially optimal.
However, it is an open problem whether better suppressors that ours exist.
Another direction of interest would be to investigate the relationship between fixation probability and absorption time.
For instance, high fixation probability seems to imply large absorption times.
Our examples of strong amplifiers and suppressors are shrinkingly simple.
This gives hope that they may indeed capture some aspects of real evolutionary networks.
Validating our models is an interesting future direction.

\subsection*{Acknowledgements}

I would like to thank Christoforos Raptopoulos and George Mertzios for bringing to my attention the question about the existence of undirected strong amplifiers and suppressors, during ICALP'14.

\bibliographystyle{plain}
\bibliography{moran-fixation}

\end{document}